\newcommand{\secref}[1]{Section \ref{sec:#1}}
\newcommand{\condref}[1]{Condition \ref{cond:#1}}
\newcommand{\appref}[1]{Appendix \ref{app:#1}}
\newcommand{\figref}[1]{Figure \ref{fig:#1}}
\newcommand{\tableref}[1]{Table \ref{tab:#1}}
\newcommand{\equref}[1]{(\ref{eq:#1})}
\newcommand{\algoref}[1]{Algorithm \ref{alg:#1}}
\newcommand{\figloc}[1]{\emph{(#1)}}
\newcommand{\Break}{\State \textbf{break}}
\newcommand{\Method}[2]{\State \textbf{method} \textsc{#1} \(\left( #2 \right)\):}
\algrenewcommand\algorithmicindent{1.3em}
\definecolor{clblue}{RGB}{222,235,247}
\DeclareMathOperator*{\argmin}{argmin}
\newcommand{\ESG}{E_{\operatorname{G}}}
\newcommand{\fSG}{f_{\operatorname{G}}}
\newcommand{\ESD}{E_{\operatorname{D}}}
\newcommand{\fSD}{f_{\operatorname{D}}}
\newcommand{\ETutte}{E_{\operatorname{T}}}
\newcommand{\EConformal}{E_{\operatorname{C}}}
\newcommand{\EARAP}{E_{\operatorname{A}}}
\newcommand{\fARAP}{f_{\operatorname{A}}}
\newcommand{\Ro}{\mathbb{R}}
\newcommand{\Rnm}[2]{\mathbb{R}^{#1 \times #2}}
\newcommand{\SOn}[1]{\operatorname{SO}(#1)}
\newcommand{\SPDn}[1]{\mathcal{S}^{#1}_{+}}
\newcommand{\bV}{\mathbf{V}}
\newcommand{\bW}{\mathbf{W}}
\newcommand{\bJ}{\mathbf{J}}
\newcommand{\bP}{\mathbf{P}}
\newcommand{\bU}{\mathbf{U}}
\newcommand{\bQ}{\mathbf{Q}}
\newcommand{\bLambda}{\mathbf{\Lambda}}
\newcommand{\norm}[1]{\left\lVert #1 \right\rVert}
\newcommand{\normu}[1]{\lVert #1 \rVert}
\DeclareMathOperator{\rot}{rot}
\DeclareMathOperator{\atant}{atan2}
\newcommand{\transp}{\top}
\setlist[itemize]{leftmargin=*}
\setlist[enumerate]{leftmargin=*}
\begin{document}

\newsiamthm{condition}{Condition}
\renewcommand*{\thecondition}{\Alph{condition}}
\newsiamthm{claim}{Claim}
\newsiamthm{remark}{Remark}
\newsiamthm{fact}{Fact}

\title{A Splitting Scheme for Flip-Free Distortion Energies}

\author{Oded Stein\thanks{Massachusetts Institute of Technology, Cambridge, MA.}
    \and Jiajin Li\thanks{The Chinese University of Hong Kong, Hong Kong.}
    \and Justin Solomon\thanks{Massachusetts Institute of Technology, Cambridge, MA.}}

\headers{A Splitting Scheme for Flip-Free Distortion Energies}{O. Stein, J. Li, and J. Solomon}

\maketitle

\begin{abstract}
    We introduce a robust optimization method for flip-free distortion
    energies used, for example, in parametrization, deformation, and volume
    correspondence.
    This method can minimize a variety of distortion energies, such
    as the symmetric Dirichlet energy and our new symmetric gradient energy.
We identify and exploit the special structure of distortion energies
    to employ an operator splitting technique, leading us to propose a novel
    \textbf{A}lternating \textbf{D}irection \textbf{M}ethod of
    \textbf{M}ultipliers (ADMM) algorithm to deal with the non-convex,
    non-smooth nature of distortion energies.
The scheme results in an efficient method where the global step involves a single matrix multiplication and the local steps are closed-form per-triangle/per-tetrahedron expressions that are highly parallelizable.
    The resulting general-purpose optimization algorithm exhibits 
    robustness to flipped triangles and tetrahedra in initial data as well as
    during the optimization.
We establish the convergence of our proposed algorithm
    under certain conditions and demonstrate applications to parametrization, deformation, and volume correspondence.
\end{abstract}

\begin{keywords}
	computer graphics, optimization, nonconvex optimization, parametrization, ADMM
\end{keywords}

\begin{AMS}
	65K10, 90C26, 65D18, 68U05
\end{AMS}

\begin{figure}[t]
 \includegraphics[width=\textwidth]{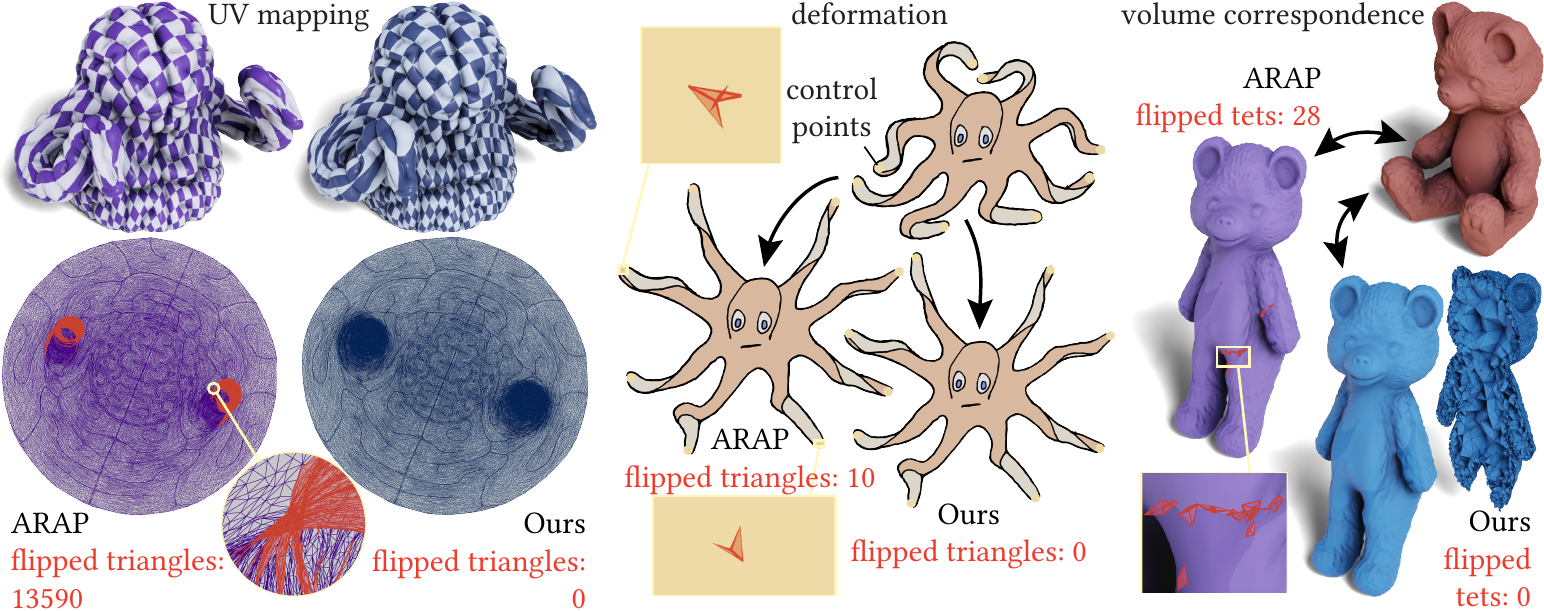}
 \caption{Minimizing distortion energies in a variety of applications using our
 splitting method:
 UV mapping \figloc{left, computing a distortion-minimizing map from the surface
 to \(\Ro^2\)},
 shape deformation \figloc{center, fixing control points to deformed position
 and find the distortion-minimizing map},
 volume correspondence \figloc{right, finding the distortion-minimizing map
 between the interior of two different surfaces}.
 Our method produces a flip-free result, unlike methods based on energies such
 as ARAP (\(\EARAP\)), which can exhibit flips when performing the same
 operation (flipped elements in red).
 \label{fig:teaser}}
\end{figure}

\section{Introduction}
\label{sec:section}

Distortion energies measure how much a mapping from one shape to another
deforms the initial shape.
Minimizing these energies can yield maps
between domains with as little distortion as possible.
Minimization of distortion energies with a variety of constraints is employed
in a wide array of computer graphics applications, such as UV mapping
(where one seeks to embed a 3D surface into 2D while minimizing distortion,
\figref{teaser} \emph{left}),
deformation (where parts of a surface or volume
are deformed, and the goal is to find the overall deformation with least
distortion, \figref{teaser} \emph{center}), and volume correspondence
(two boundary surfaces are given, and a distortion-minimizing map between the
two volumes is desired, \figref{teaser} \emph{right}).
We are interested in computing maps that minimize distortion energies on
triangle and tetrahedral meshes.

Flip-free distortion energies comprise an important subset of distortion
energies.
A mapping that minimizes such an energy will never invert (flip) a
triangle or tetrahedron.
Flip-free distortion energies are difficult to optimize:
they are usually non-linear and non-convex, and have singularities
that correspond to collapsed elements.
Typical optimization methods based on line-search
require feasible, flip-free iterates.
Thus, they must exercise great care to avoid singularities, 
where they will fail.
For applications such as volume correspondence, it is difficult to
even initialize with a feasible flip-free configuration.
Certain line-search-free approaches can struggle with the problem's
non-convexity in both objective function and feasible set.

We focus on distortion energies that depend only on the mapping's Jacobian,
are invariant to rotations, and are convex over symmetric positive definite
matrices.
This includes popular flip-free distortion energies such as the
\emph{symmetric Dirichlet energy}, as well as our new
\emph{symmetric gradient energy}.
Previous methods optimizing the symmetric Dirichlet energy can, to our
knowledge, not be mathematically proven to
converge
in the limit.
We exploit the convexity in these energies
by splitting the Jacobian of the mapping \(W\) into a rotational part \(U\)
and a flip-free, symmetric part \(P\). 
We propose a novel ADMM algorithm to leverage this splitting in an efficient
fashion, which results in three sub-problems: optimize the vertex positions of
the mapping \(W\), optimize the Jacobian's rotational part \(U\), and optimize
the Jacobian's rotation-free part \(P\). 
The optimization in \(W\) is linear,
the optimization in \(U\) is an explicitly-solvable Procrustes problem, and
the optimization in \(P\) (the only part containing the objective function) has a closed-form solution for both energies
considered in this article.
The optimizations in \(U\) and \(P\) also decouple over triangles/tetrahedra,
and can thus be parallelized, while 
the optimizations in \(W\) and \(P\) are convex.

In our approach, the target mapping does not need to be flip-free for every
iteration until convergence is attained---since \(P\) is always flip-free,
the distortion energy will never be singular
and can be evaluated even if \(W\) contains flips.
Thus, our approach is naturally robust to flipped elements in the
iterates and can be initialized with flipped elements.

ADMM-based algorithms are, in general, not guaranteed to converge for
non-convex nonlinear problems, such as the ones involving distortion energies.
For our method, however, we can present theoretical analysis of convergence
behavior that can show convergence given certain conditions.
This mathematical proof goes beyond what is usual for other flip-free
optimization algorithms, yielding the first optimization of the symmetric
Dirichlet energy that can be proven to converge.
Beyond describing the circumstances under which we reach a critical point of
the optimization problem, this analysis significantly informs our algorithm:
it lets us automatically choose appropriate augmented Lagrangian penalty
weights.
\\

Our contributions are:
\begin{itemize}
    \item a parallelizable optimization method for non-linear non-convex
    flip-free distortion energies that is robust to the presence
    of flipped triangles in the initial data;
    \item a convergence analysis that discusses the convergence of our algorithm
    to a stationary point given certain conditions; 
    \item a novel distortion energy, the \emph{symmetric gradient energy},
    which yields flip-free distortion-minimizing maps that are similar to
    popular non-flip-free methods (but are flip-free).
\end{itemize}
We demonstrate our method on applications in UV parametrization, surface and
volume deformation, and volume correspondence (see \figref{teaser}).

\section{Related Work}
\label{sec:relatedwork}

\subsection{Optimizing Distortion Energies}
\label{sec:relatedsurfaceparametrization}

Distortion energies have a long history in geometry processing and
related fields like physical simulation and differential geometry.
They are part of tools for parametrization, deformation, and related tasks.

Early approaches include optimizing harmonic and conformal energies to produce
angle-preserving mappings with a variety of optimization methods
\cite{Eck1995,Desbrun2002,Levy2002,Gu2003,Sheffer2004,Mullen2008}.
As they can be measured and optimized efficiently, conformal energies
remain popular and are the subject of ongoing research
\cite{Springborn2008,Sawhney2017,Sharp2018,Soliman2018,Gillespie2021}.

A different way to measure distortion is to quantify the deviation of a
mapping's local structure from rotation, as in the
As-Rigid-As-Possible (ARAP) energy \cite{Sorkine2007}, which can be efficiently
optimized using a per-element local-global approach \cite{Liu2008}.
ARAP is similar to other quasi-elastic energies \cite{Chao2010}.

Optimization of flip-free distortion energies goes back to the work of Tutte
\cite{Tutte1963}, who showed that minimizing Tutte's energy while fixing
boundary vertices to a convex polygon yields a flip-free mesh parametrization.
Recent methods compute flip-free maps while simultaneously minimizing
some kind of distortion for surfaces \cite{Lipman2012,Kovalsky2015},
volumes \cite{Aigerman2013}, simplicial maps \cite{Lipman2014},
non-standard boundary conditions \cite{Weber2014}, or with a focus on
numerical robustness \cite{Shen2019}.
Conformal and harmonic energies can be augmented to produce flip-free
maps, e.g.\ by including cone singularities in the parametrization 
\cite{Hefetz2019}.
Cone singularities can be used in a variety of ways to produce parametrizations
\cite{Soliman2018,Chien2016}.

The symmetric Dirichlet energy \cite{Schreiner2004,Smith2015} combines the idea
of measuring the deviation of the Jacobian from the identity with flip-free
maps:
the energy is singular for zero-determinant Jacobians, which means that during
the minimization process elements can not collapse and invert.
Since the symmetric Dirichlet energy is non-convex and singular, it
requires specialized optimization algorithms.
The symmetric Dirichlet energy can be optimized in a wide variety of ways:
with a modified line-search to avoid the energy's
singularities in a L-BFGS-style optimization (augmented with techniques for
global bijectivity) \cite{Smith2015},
with a quadratic proxy to accelerate convergence \cite{Kovalsky2015},
with a local-global modification of line-search that can also be applied to
a variety of other rotation-independent distortion energies
\cite{Rabinovich2017},
with a preconditioned line search based on Killing field approximation \cite{Claici2017},
and by progressively adjusting the reference mesh \cite{Liu2018}.
These approaches require initialization with a flip-free map whose distortion is
then further reduced.

A different approach to generating flip-free distortion-minimizing relies on
custom distortion
energies that can be optimized efficiently without
line-search methods \cite{Garanzha2021}.
Concurrent work introduces a framework for flip-free conformal maps
\cite{Gillespie2021}, and a framework for globally elastic 3D deformations
using physical simulation methods \cite{Fang2021}.
Yet other approaches include initializing with a flip-free map by lifting the
mesh to a higher dimension and achieving injectivity that way \cite{Du2020},
and applying block coordinate descent \cite{Naitsat2018}.
Using a barrier-aware line-search and quasi-Newton methods, one can optimize
a variety of distortion energies \cite{Zhu2018}.
One can also discretize physical elasticity energies that naturally model
distortion and carry out a physical simulation \cite{Smith2019}.

The concurrent work WRAPD \cite{Brown2021} also uses ADMM to compute flip-free
distortion-minimizing maps, but with a different splitting technique than ours. 
They use two-block ADMM (compared to our three blocks with
variables \(\bW\), \(\bU\), \(\bP\)), where the non-convex ADMM sub-step
requires a line-search optimization to be solved to convergence every step
(without guarantees on how this might interfere with the ADMM's convergence).
They do not include a convergence proof (which we do).

There are many other approaches for efficient flip-free distortion minimization
\cite{Aigerman2014,Fu2015,Naitsat2018,Su2019,Su2020,Choi2018,Yueh2019}
For certain applications, such as quad meshing \cite{Kalberer2007},
surface-to-surface mapping \cite{Ezuz2019,Schmidt2019,Schmidt2020},
joint optimization of map and domain \cite{Li2018}, globally
bijective mapping \cite{Jiang2017}, and input-aligned maps \cite{Myles2014},
distortion energies with special properties are used.

\subsection{Alternating Direction Method of Multipliers}
\label{sec:relatedaugmentedlagrangian}

The augmented Lagrangian method and the alternating direction method of
multipliers (ADMM) are popular optimization methods \cite{Boyd2011}.
While the convergence of ADMM is well-known for convex problems,
convergence can also be proven in some other scenarios that often necessitate
special proofs, such as classes of weakly convex problems
\cite{zhang2019fundamental}, certain non-convex ADMMs with \emph{linear}
constraints \cite{Zhang2020,Wang2019,Hong2016}, non-convex and non-linear, but
equality-constrained problems \cite{Wang2021}, and bilinear constraints
\cite{Zhang2021}.
For specific non-linear non-convex ADMMs, specialized proofs exist
\cite{gao2020admm,Yang2017}.
Our method does not exactly fit any of the above approaches, but uses ideas
from many of them to analyze convergence, such as an explicit boundedness
condition \cite{Zhang2021}, the use of a potential function
\cite{Zhang2020},
and the K\L{} condition \cite{gao2020admm}.

ADMM has been employed in many computer graphics and image
processing applications.
It is especially useful when a convex problem can be split into multiple
simpler sub-problems that are each convex -- in that case, convergence of the method
follows from standard results \cite[Section 3.2]{Boyd2011}.
Such convex ADMM is used, for example, to produce developable surfaces after
convex relaxation \cite{Sellan2020}, to speed up optimization in computer vision
and machine learning \cite{Xu2017}, for aligning point sets with rigid
transforms through relaxation of a non-convex problem \cite{Sanyal2017},
as a sub-step in a rotation-strain simulation of elastic objects \cite{Pan2015},
to compute the Earth Mover's Distance \cite{Solomon2014},
and for isogeometric analysis after transforming a non-convex problem into a
biconvex problem \cite{Nian2016}.

When a problem is non-convex, ADMM is more difficult to employ in a way that
assures convergence.
As a result, some applications of ADMM do not provide an explicit convergence
guarantee but are able to show convergence empirically.
Past work uses non-convex ADMM with a linear constraint for the physical
simulation of elastic bodies with collisions \cite{Overby2017}, an approach
applied later to character deformation \cite{Minor2018} and cloth simulation
\cite{Minor2018cloth}.
In later concurrent work this approach is extended to globally injective maps
\cite{Overby2021} by adding a step to the ADMM that promotes
injectivity through a nonlinear optimization procedure
while temporarily tolerating non-injective maps;
this ADMM uses a line-search in the inner loop.

For some specific non-convex applications of ADMM, convergence can be proven,
just as we do in this article,
although these examples do not cover our use case
\cite{zhang2019accelerating,Ouyang2020}.
\\

Unlike past applications of ADMM to the problem of distortion-minimizing maps,
our distortion minimization technique has all of the following features:
\begin{itemize}
    \item We solve a non-convex problem with a \emph{non-linear constraint},
    \((G\bW)_i = \bU_i \bP_i\).
    \item Our splitting contains \emph{three} ADMM blocks instead of the usual
    two, designed so that each block is solvable in closed-form.
    \item We present a convergence analysis specialized to our
    algorithm; to our knowledge, this theoretical analysis is new and adds
    to the cases in which non-convex multi-block ADMM is proven to converge.
\end{itemize}

\section{Problem setup}

\subsection{Preliminaries}
\label{sec:preliminaries}

We compute a map from a \emph{source mesh} to a deformed
\emph{target mesh} composed of triangle or tetrahedra.
The goal is to measure the distortion of the map and to optimize the map
(and the target mesh) subject to certain constraints.
\(\bV \in \Rnm{n}{d_\iota}\) contains the coordinates of the vertices of the
source mesh, where \(n\) is the number of vertices and \(d_\iota=2,3\) is
their dimension.
The individual vertices are denoted by \(\bV_i, i=1,\dots,n\).

The optimization variable \(\bW \in \Rnm{n}{d_o}\) contains the target
coordinates of the vertices under the map, where
\(d_o \leq d_i\)
is the dimension of the output vertices. 
The dimension \(d_o\) 
can be different from the input dimension, for example when we compute the
UV mapping of a surface in 3D, where \(d_\iota=3\) and \(d_o=2\).
The individual vertices are denoted by \(\bW_i, i=1,\dots,n\).
The number of triangles or tetrahedra (elements) in the
mesh is \(m\) and the dimension of the elements is \(d\)
(\(d=2\) for triangles and \(d=3\) for tetrahedra).
\(w_i\) is the area or volume of the \(i\)-th element, depending on \(d\).
We will deal with collections of matrices associated with each
triangle or tetrahedron of a mesh.
For this purpose, we
    let \((\Rnm{d}{d})^m\) be the space of \(m\) independent \(d \times d\)
    matrices.
    If \(\bJ \in (\Rnm{d}{d})^m\), we denote by \(\bJ_i\) the \(i\)-th matrix in
    \(\bJ\).

Our energies depend on the Jacobian of the map \(\bV\mapsto\bW\).
Assuming our map is affine when restricted to the interior of each element,
the Jacobian is piecewise constant.
\begin{definition}[Piecewise constant Jacobian]
    \label{def:jacobian}
    Let \(\bV \in \Rnm{n}{d_\iota}\) be the source coordinates and
    \(\bW \in \Rnm{n}{d_o}\) the target coordinates of a mapping of a mesh with
    \(m\) triangles or tetrahedra. 
Then \(G : \Rnm{n}{d_o} \rightarrow (\Rnm{d_o}{d_o})^m\)
    is the linear operator (dependent on \(\bV\))
    such that the mapping's Jacobian for the \(i\)-th triangle
        or tetrahedron
    is given by the matrix \((G\bW)_i\).
\end{definition}
Supplemental material describes how to compute the Jacobian from target
vertex positions.
\smallskip

We use the Frobenius product and norm for vectors and matrices.
    The \emph{Frobenius product}
    (or \emph{dot product} for vectors) is defined as
    \(
        X \cdot Y \coloneqq \operatorname{trace} X^\transp Y
        \;\textrm{.}    
        \)
    The \emph{Frobenius norm} (or \(L^2\) norm for vectors) is defined via
\(
        \norm{X}^2 \coloneqq X \cdot X
        \textrm{.}
\)

At last we define spaces for the rotation and flip-free parts of the Jacobian.
\begin{definition}[Rotation and semidefinite matrices]
   \(\SOn{d}\) is the space of rotation matrices of dimension \(d\). 
   \(\SPDn{d}\) is the space of symmetric positive definite (spd) matrices of
   dimension \(d\).
\end{definition}

\subsection{Optimization Target}

We can understand a variety of algorithms for parametrization, deformation, and
related tasks as optimizing a generic energy of the following form:
\begin{equation}\label{eq:genericdeformationenergywithflips}
    E_{\mathrm{generic}}(\bW) \coloneqq \sum_{i=1}^m
    w_i f\big((G\bW)_i\big) 
    \;\textrm{.}
\end{equation}
Here, \(f(\cdot)\) denotes a per-element distortion energy, summed over the
triangles/tetrahedra \(i\) of the mesh;
we call \(f\) the \emph{defining function} of our problem.
It is evaluated on the Jacobian \(G\bW\) of the map \(\bV\mapsto\bW\) and is
typically designed to be a rotation-invariant function quantifying how much
\((G\bW)_i\) deviates from being a rigid motion.
Rotation invariance implies we can think of \(f(\cdot)\) as a function of the
singular values of \((G\bW)_i\).

Each possible choice of distortion energy \(f(\cdot)\) captures a different
trade-off between different means of deforming the source domain onto the
target, e.g., between angle and area preservation.
Below, we discuss some choices for \(f(\cdot)\) used in our experiments;
we refer the reader to \cite[Table I]{Rabinovich2017} for an exhaustive list,
many of which can be easily plugged into our optimization framework.
In \secref{symmetricgradient}, we propose one additional option, the symmetric
gradient energy, which appears to yield flip-free maps in practice that are
similar to the popular---but not flip-free---ARAP energy (discussed
in \appref{arapse}).

\begin{figure}
    \includegraphics[width=\textwidth]{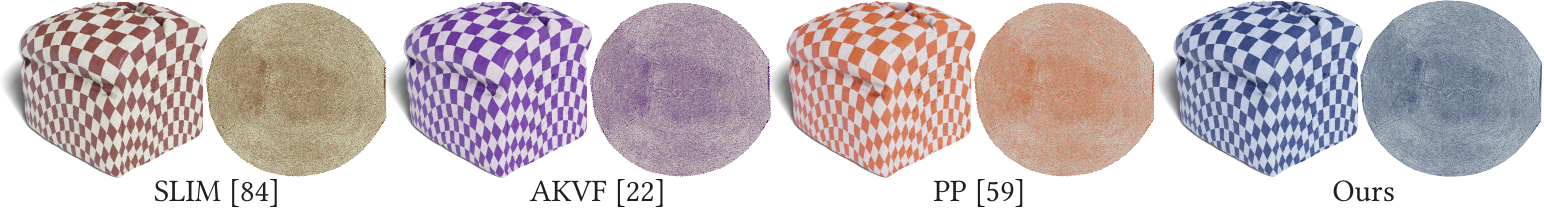}
    \caption{
        Our optimization method visually matches the results of a variety
        of other methods when tasked with producing a UV map with the same energy
        \(\ESD\).
    \label{fig:symmetricdirichletinoursandprevious}}
\end{figure}

Of particular interest are distortion energies that explicitly avoid inverted
elements in the computed mapping.
To achieve this, we augment \eqref{eq:genericdeformationenergywithflips} with a
term that forbids flipped triangles or tetrahedra, and add conditions on \(f\):
\begin{equation}\label{eq:genericdeformationenergy}
    \boxed{
    E(\bW) \coloneqq \sum_{i=1}^m
    w_i f\big((G\bW)_i\big) + \chi_{+}\big(\det (G\bW)_i\big)
    \;\textrm{.}
    }
\end{equation}
Here, \(\chi_+\) denotes the indicator function
\[
\chi_+(x)\coloneqq
\begin{cases}
\;0 & \textrm{ if }x \geq 0\\
\;\infty & \textrm{ otherwise.}
\end{cases}
\]
    If \(f\) is smooth on all Jacobians with positive determinant, then the energy
    \(E\) is smooth on all maps with positive determinant.
The extra term in \equref{genericdeformationenergy}
containing the characteristic function \(\chi_+\)
can be understood as a constraint preserving local injectivity of the map in the
interior of each element.
This needs to be combined with an appropriate \(f(x)\) which goes to \(\infty\)
for \(x \rightarrow 0\), to create an appropriate barrier that ensures the
region where \(\chi_+\) is infinite is never reached.
This constraint is implicit in the design of many past algorithms
\cite{Smith2015,Rabinovich2017,Claici2017,Liu2018}, but we choose
to expose it explicitly as it will inform our design in
\secref{ouraugmentedlagrangianmethod}.

The main thrust of our research is to propose an efficient algorithm for
optimizing energies of the form \equref{genericdeformationenergy}.
\tableref{allenergies} offers a quick overview over all energies that are
used in this article.
\figref{symmetricdirichletinoursandprevious} shows our optimization
reproducing a parametrization with an energy popular in previous work, the
symmetric Dirichlet energy \(\ESD\).

\begin{table}
\centering
    \begin{tabular}{p{52pt}|p{36pt}|p{120pt}|p{68pt}|p{85pt}}
        Energy & Symbol & Definition & Discussed in... & Properties \\
        \hline

        \rowcolor{clblue}
        Tutte & \(\ETutte\) &
        {\scriptsize
            \(\ETutte(\bW) = \sum\limits_{\textrm{edges }(i,j)}
            \frac{\norm{\bW_i - \bW_j}^2}{\norm{\bV_i - \bV_j}} \)
        } &
        \appref{tutte} &
        {\scriptsize
            flip-free in 2D (not 3D); linear; no initialization needed
        } \\

        Conformal & \(\EConformal\) &
        {\scriptsize
            \equref{genericdeformationenergywithflips} with
            \(f(X) = \frac12 \norm{X}^2\) plus additional area term
        } &
        \appref{conformal} &
        {\scriptsize
            not flip-free; linear; no initialization needed
        } \\

        \rowcolor{clblue}
        ARAP & \(\EARAP\) &
        {\scriptsize
            \equref{genericdeformationenergywithflips} with
            \(f(X) = \fARAP(X) = \frac12 \norm{X - \rot X}^2\), 
            \(\;\;\rot X\) is the rotational part of \(X\)
        } &
        \appref{arapse} &
        {\scriptsize
            not flip-free; nonlinear; desirable rigidity property; needs
            initializer; efficient optimizer
} \\

        Symmetric Dirichlet & \(\ESD\) &
        {\scriptsize
            \equref{genericdeformationenergy} with
            \(f(X) = \fSD(X) = \frac12 \left(
            \norm{X}^2 + \norm{X^{-1}}^2 \right)\)
        } &
        \appref{symmetricdirichlet} &
        {\scriptsize
            flip-free; strong singularity; nonlinear;
            needs initializer
        } \\

        \rowcolor{clblue}
        Symmetric gradient & \(\ESG\) &
        {\scriptsize
            \equref{genericdeformationenergy} with
            \(f(X) = \fSG(X) = \frac12 \norm{X}^2 - \log\det X\)
        } &
        \secref{symmetricgradient} &
        {\scriptsize
            flip-free; weak singularity; nonlinear;
            needs initializer
        }
    \end{tabular}
    \quad\\ \quad\\
    \caption{A table summarizing all deformation energies featured in this
    article.
    Energies from previous work are introduced in more detail in
    \appref{deformationenergies};
    the symmetric gradient energy is introduced in \secref{symmetricgradient}.
    \label{tab:allenergies}}
\end{table}

\section{Our Optimization Method}
\label{sec:ouraugmentedlagrangianmethod}

Our optimization method can be applied to distortion energies of the form
\equref{genericdeformationenergy}
where the defining function \(f\) is convex over the set of symmetric positive
semidefinite matrices \(\SPDn{d}\) (a property which holds for many distortion
energies).
There are two main challenges in optimizing energies
of this form.
One challenge is the non-convexity of the defining function \(f\) when
applied to arbitrary matrices \((G\bW)_i \in \Rnm{d}{d}\).
Another challenge is the non-smoothness of the characteristic function
\(\chi_{+}\) and potential singularities in \(f\).
Previous work solves these issues by, e.g., employing
line search methods which can handle non-convex problems,
and are specifically constructed to avoid the singular regions of
\(\chi_{+}\) and \(f\) during time stepping (see the discussion in
\secref{relatedwork}).

We address these challenges differently, starting
with the polar decomposition \cite[Theorem 2.17]{Hall2015}.
Every matrix \(J \in \Rnm{d}{d}\) with positive determinant can be decomposed
into a rotation matrix \(U \in \SOn{d}\) and a symmetric matrix
\(P \in \SPDn{d}\) such that
\begin{equation}\label{eq:polardecomposition}
J = U P
\;\textrm{.}
\end{equation}

Applying the decomposition to our problem, we can reformulate
\equref{genericdeformationenergy} as 
\begin{equation}\label{eq:admm-os}
\begin{split}
    & \min\limits_{\substack{\bW\\
        \bU \in (\SOn{d})^m\\
        \bP \in (\SPDn{d})^m}} \;
    \sum_{i=1}^m  w_i f(\bP_i)\\
    & \quad \quad \text{s.t. }  \quad \,(G\bW)_i\ - \bU_i \bP_i =0,
    \quad \forall i
    \;\textrm{.}
\end{split}
\end{equation}
This new formulation is now convex in both \(\bW\) and \(\bP\) if \(f\) is
convex over the positive semidefinite cone \(\SPDn{d}\):
\(\fSG\) and \(\fSD\) are non-convex for arbitrary matrices,
but convex for matrices in \(\SPDn{n}\), like many distortion energies.
The non-convexity is now entirely contained in \(\bU\).
We have thus extracted the \emph{hidden convexity of the problem}, and
restricted the non-convexity to rotational matrices only
(this extraction of salient parts of the energy mirrors approaches that
extract the singular values
\cite{Smith2015,Rabinovich2017,Claici2017,Liu2018} and appears in concurrent
work \cite{Brown2021}).

The constraint from \equref{admm-os} is still difficult to accommodate.
We deal with this problem by employing an ADMM \cite{Boyd2011} tailored
for our problem.
Let \(\bW\) be our target vertex positions, and let
\(\bU \in (\SOn{d})^m, \;\bP \in (\SPDn{d})^m, \;\bLambda \in (\Rnm{d}{d})^m\).
Our \emph{augmented Lagrangian function} is
 \begin{equation}\label{eq:augmentedlagrangian}
    \boxed{
    \!\!
    \begin{array}{r@{\ }l}
    \Phi(\bW, \bU, \bP, \bLambda) \coloneqq&
    \sum_{i=1}^m w_i f(\bP_i)  + \frac{\mu_i}{2} \left(\norm{(G\bW)_i - \bU_i \bP_i + \bLambda_i}^2
    - \norm{\bLambda_i}^2 \right) \!\!
    \end{array}
    }
\end{equation}
where \(\mu_i > 0\) are a set of \(m\) Lagrangian penalty weights.
The variable \(\bLambda\) is a scaled Lagrange multiplier for the constraint
\((G\bW)_i = \bU_i \bP_i\).

    In the formulation of \equref{augmentedlagrangian},
    the function \(f\) that contains a singularity for matrices with zero
    determinant is only ever evaluated on \(\bP_i\), which \emph{cannot}
    have zero determinant, as it is symmetric and positive definite.
    If the Jacobian map \((G\bW)_i\) inverts or degenerates a triangle, i.e.,
    has zero or negative determinant, this manifests as a
    feasibility error (which is finite).

\begin{figure}
    \includegraphics[width=\linewidth]{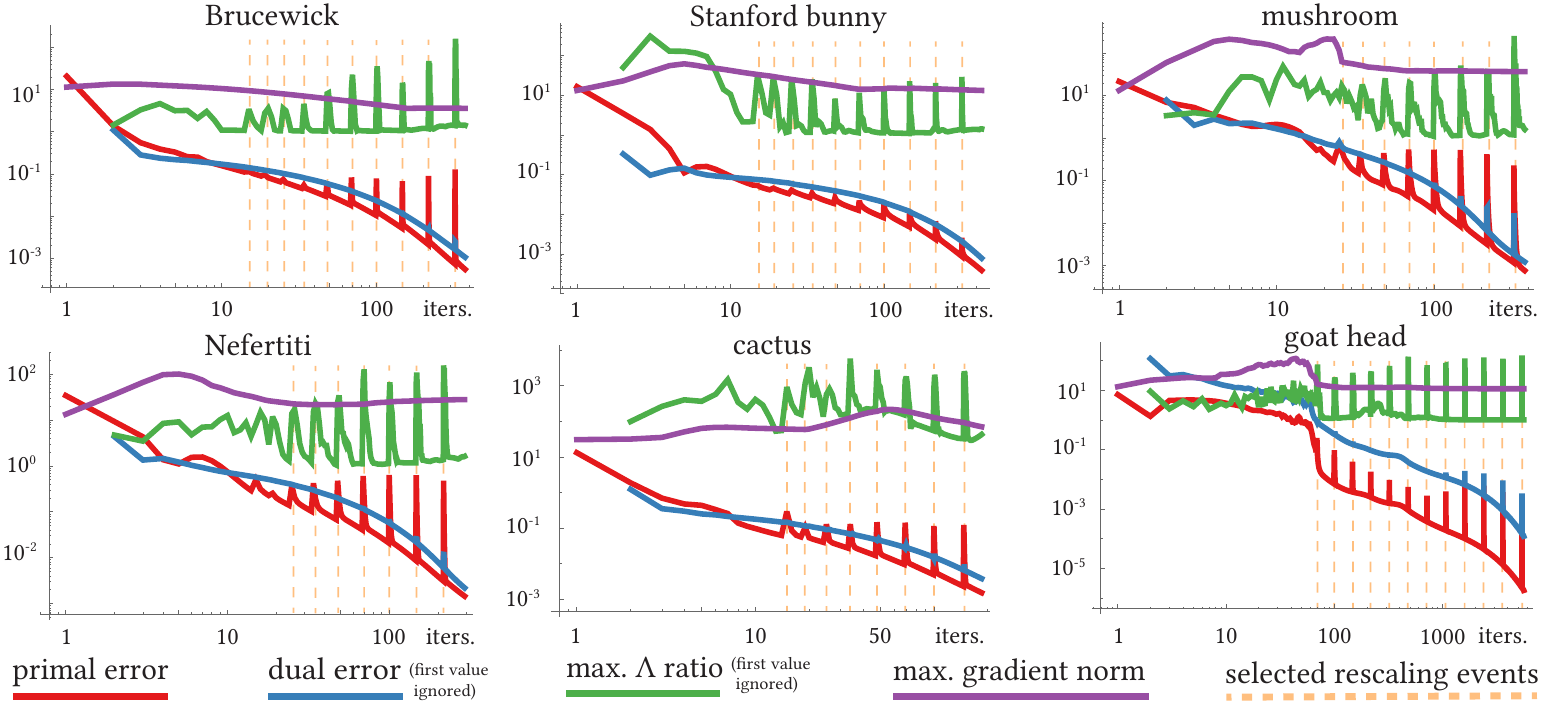}
    \caption{Log-log plots of the optimizations performed in \figref{uvmapping}
    showing \(e^{\textrm{prim}}\), \(e^{\textrm{dual}}\),
    the largest energy gradient \(\|\nabla f(\bP_i^{(k)})\|\) (as a proxy for
    \(B_i\) from \condref{gradf}, and the largest ratio
    \(\scriptstyle \norm{\Lambda_i^{(k+1)} - \Lambda_i^{(k)}} / \norm{\frac{1}{2} \left( \Lambda_i^{(k+1)} - \Lambda_i^{(k)} + U_i^{(k+1)}(\Lambda_i^{(k+1)})^\transp U_i^{(k+1)} - U_i^{(k)}(\Lambda_i^{(k)})^\transp U_i^{(k)} \right)} \)
    (as a proxy for \(\gamma^{1/2}\) from \condref{lambdabound},
    where the \(\Lambda_i\) are scaled by \(\mu_i\) to be able to compare
    them across rescalings).
    The errors steadily decrease, and gradient \& \(\Lambda\) ratio remain
    bounded, except for rescaling events that lead to temporary spikes that the
    bounds quickly recover from.
    Rescaling is not part of \algoref{mainalgorithm} for which we
    analyze convergence, but are employed in \algoref{actualalgorithm} to speed
    up the performance.
    \label{fig:uvmappingbounds}}
\end{figure}

The augmented Lagrangian method now consists of successively optimizing \(\Phi\)
in each of its primal arguments in an alternative way, and then updating the
dual variable \(\Lambda\).
Our method is described in pseudocode form in \algoref{mainalgorithm};
a description of each of the substeps follows.
\begin{algorithm}
\caption{Three-block ADMM for flip-free distortion energies
\label{alg:mainalgorithm}}
\begin{adjustwidth}{-0.5em}{}
\begin{algorithmic}[1]
    \Method{SplittingOptimization}{\bW^{(0)}, \bU^{(0)}, \bP^{(0)}, \bLambda^{(0)}}
    \For{\(k \gets 1, \dots\)}
        \State \(\!\! \bW^{(k)} \gets \argmin_{\bW\!, A\bW=b} \,
        \Phi\left(\bW, \bU^{(k-1)}, \bP^{(k-1)}, \bLambda^{(k-1)}\right) \)
        \State \(\bU^{(k)} \gets \argmin_{\bU} \Phi\left(\bW^{(k)}, \bU, \bP^{(k-1)}, \bLambda^{(k-1)}\right)
         + p\!\left(\bU,\bU^{(k-1)}\right) \) \State \(\, \bP^{(k)} \gets \argmin_{\bP} \,
        \Phi\left(\bW^{(k)}, \bU^{(k)}, \bP, \bLambda^{(k-1)}\right) \)
        \State \(\mkern 1mu \bLambda^{(k)}_i \gets \bLambda^{(k-1)}_i
        +\, ( G \bW^{(k)} )_i - \bU_i^{(k)} \bP_i^{(k)} \quad \forall i\)
    \EndFor
\end{algorithmic}
\end{adjustwidth}
\end{algorithm}

\subsubsection*{Updating \(\bW\)}
To update \(\bW\) we optimize \(\Phi\) with respect to \(\bW\).
\(\Phi\) is a quadratic function in \(\bW\), and can thus be optimized by
solving the linear system
\begin{equation}\begin{split}\label{eq:solvinginw}
    L \bW &= G^\transp r
    \;\textrm{,}
\end{split}\end{equation}
where
\begin{equation*}\begin{split}
r_i = \mu_i \left( \bU_i \bP_i - \Lambda_i \right)\in (\Rnm{d}{d})^m\qquad \textrm{and}\qquad
L = \sum_{i=1}^m \mu_i G_i^\transp G_i \in \Rnm{n}{n}
    \;\textrm{.}
\end{split}\end{equation*}
As \(G\) is implemented as a simple finite element gradient matrix (see
supplemental material), we can write \(L = G^\transp M G\), where \(M\) is a
mass matrix with the respective entries of \(\mu_i\) on the diagonal.
As a result, it is a sparse Laplacian matrix similar to the cotangent Laplacian
\cite{Pinkall1993}.

At this step, we can also enforce constraints, e.g., for deformation, on
the vertex positions \(\bW\).
The quadratic optimization problem can be solved, with any feasible linear
constraint of the form \(A\bW = b\), at negligible additional cost.
In fact, since \(G\) maps constant functions to \(0\), there needs to be a
minimum number
of
constraints to make \equref{solvinginw} solvable;
in the absence of any constraints (such as in UV mapping) we simply fix the
first vertex of \(\bW\) to the origin.

\subsubsection*{Updating \(\,\bU\)}
To update \(\bU\), we optimize \(\Phi\) augmented with a proximal
function \(p\),
\begin{equation}\label{eq:proximalfunctionu}
    p(\bU, \bU^{(k-1)}) \coloneqq
    \sum_{i=1}^m \frac{h_i}{2} \norm{\bU_i - \bU_i^{(k-1)}}^2
    \;\textrm{,}
\end{equation}
where \(h_i > 0\) is the proximal parameter and \(\bU^{(k-1)}\) is
the iterate from a previous step.
This proximal function \(p\) is needed for the algorithm to converge
(see \secref{convergence}).

Both \(\Phi\) and \(p\) decouple in \(\bU\) over elements, giving the problem
\begin{equation}\label{eq:rawoptimizationinudecoupled}
    \argmin_{\bU_i \in \SOn{d}} 
    \frac{\mu_i}{2} \norm{(G\bW)_i - \bU_i \bP_i + \bLambda_i}^2 +
    \frac{h_i}{2} \norm{\bU_i - \bU_i^{(k-1)}}^2
    \;\textrm{.}
\end{equation}
Since \(\bU_i \in \SOn{d}\) and \(\bP_i \in \SPDn{d}\),
we get that \equref{rawoptimizationinudecoupled} is equivalent to
the Procrustes problem \cite{Gower2004}
\begin{equation}\label{eq:optimizationinudecoupled}
    \argmin_{\bU_i \in \SOn{d}} 
    \norm{\bU_i - \bQ_i}^2
    \;\textrm{,}
\end{equation}
where
\(
    \bQ_i = \left( (G\bW)_i + \bLambda_i \right) \bP_i
    + \frac{h_i}{\mu_i} \bU_i^{(k-1)}
    \;\textrm{.}
\)
Supplemental material describes our approach to solving the Procrustes
problem in detail;
there is an explicit closed-form solution.

\subsubsection*{Updating \(\bP\)}
To update \(\bP\) we optimize \(\Phi\) with respect to \(\bP\).
As \(\Phi\) decouples in \(\bP\) over elements, we can solve optimize it
separately for each triangle/tetrahedron,
\begin{equation}\label{eq:optimizinginp}
    \argmin_{\bP_i \in \SPDn{d}}\; w_i f(\bP_i) +
    \frac{\mu_i}{2} \norm{(G\bW)_i - \bU_i \bP_i + \bLambda_i}^2
    \;\textrm{.}
\end{equation}
Since both \(\fSG\) and \(\fSD\) are convex over \(\SPDn{d}\), the problem
\equref{optimizinginp} is convex.
We merely need to find the single critical point by finding the solution in
\(\SPDn{d}\) of
\begin{equation}\label{eq:rootfindinginp}
    w_i \nabla f(\bP_i) + \mu_i \bP_i
    = \mu_i \operatorname{symm}\left( \bU_i^\transp \left( (G\bW)_i + \bLambda_i \right) \right)
    \;\textrm{,}
\end{equation}
where \(\operatorname{symm}(\cdot)\) symmetrizes a matrix: 
\(\operatorname{symm}(X) = \frac12 (X + X^\transp)\).
This can be solved explicitly in closed-form for both \(\fSG\) and \(\fSD\).
Due to floating point issues, closed-form solvers for \(\fSD\) can fail in
certain scenarios, we then use a simple iterative scheme (see supplemental
material).

\subsubsection*{Updating \(\bLambda\)}
To update the estimate of the Lagrange multiplier \(\Lambda\),
we apply a gradient ascent approach for the scaled augmented Lagrangian
method \cite[Section 3.1.1]{Boyd2011},
\begin{equation}\label{eq:lambdaupdate}
    \bLambda_i = \bLambda_i^{(k-1)} + (G\bW)_i - \bU_i \bP_i
    \;\textrm{,}
\end{equation}
where \(\bLambda^{(k-1)}\) is the iterate from a previous step of the
optimization method.
\\

As we will see in \secref{convergence}, \algoref{mainalgorithm} can be proven to
converge under certain conditions.
The algorithm requires the choice of both a Lagrangian penalty parameter
\(\mu_i\), as well as a proximal parameter \(h_i\).
The choice of both of these will be informed directly by the proof.
The actual algorithm implemented in our code (which is slightly different), the
initialization of \(\bW^{(0)}, \bU^{(0)}, \bP^{(0)}, \bLambda^{(0)}\), as well
as the termination condition are discussed in \secref{algorithm}.

\subsubsection*{Robustness to Flipped Elements}
\begin{figure}
    \includegraphics[width=\linewidth]{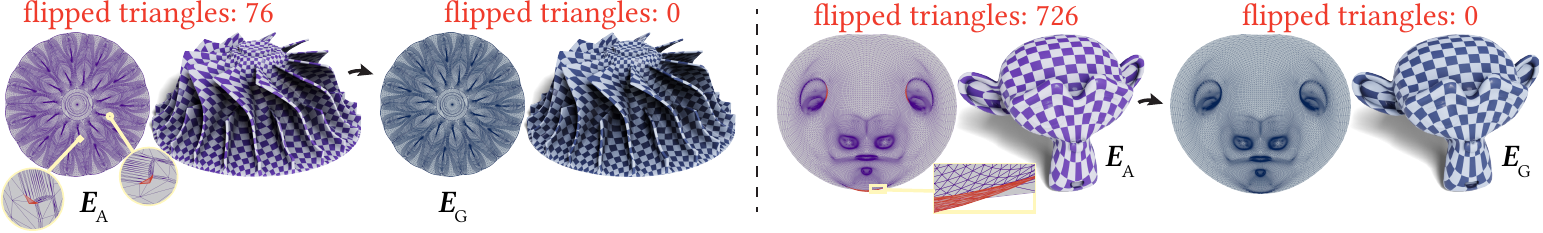}
    \caption{Using the robustness of our method with respect to flipped
    triangles in the initial iterate, we can use our method to unflip the
    outputs of other methods.
    In this example, a flip-containing UV parametrization computed with
    \(\EARAP\) is unflipped by running our method to optimize \(\ESG\) until
    a flip-free configuration is obtained, resulting in a parametrization that
    is very similar to \(\EARAP\)'s, but flip-free
    (flipped triangles in red).
    \label{fig:unflippingtriangles}}
\end{figure}
\algoref{mainalgorithm} can be robust with respect to flipped triangles in the
target mesh iterate \(\bW\).
Since the defining energy function \(f\) is only ever evaluated on the iterate
\(\bP\), which consists of
symmetric positive definite matrices (\(\SPDn{d}\))
this evaluation can never be undefined, \emph{even if} the target mesh iterate
\(\bW\) currently contains flipped triangles.
This is a property of the augmented Lagrangian method's weakly enforced
constraint.
\((G\bW)_i = \bU_i \bP_i\) is only ever strongly active when the algorithm has
converged, allowing us to circumvent the problem of evaluating \(f\) of a
singularity which can affect previous work based on line search optimization.
We can use this robustness property to unflip parametrizations produced by other
methods that do not guarantee flip-free minimizers, such as \(\EARAP\).
In \figref{unflippingtriangles} we initialize our optimization with the
result of an \(\EARAP\) UV map, and run it until the map contains no
flipped triangles, unflipping the triangles left over by \(\EARAP\). 
The limits of this robustness are discussed in \secref{limitations}.

\subsubsection*{Closed-Form Solutions for Every Substep}

The four substeps of our method are explicitly computable with
closed-form solutions for the energies \(\ESG\) and \(\ESD\)
(unlike, e.g., \cite{Brown2021}) and do not require parameter tuning.
This makes our method easy to implement.

\subsubsection*{Efficient Evaluation \& Parallelization}

Every step of our method can be computed efficiently.
The \(\bW\) step contains only a single linear solve, and the matrix \(L\) does
not change as long as the penalties \(\mu_i\) do not change.
This allows us to precompute the decomposition once, and only apply a cheap
backsubstitution every iteration.
To achieve this, we use Suitesparse's CHOLMOD if the constraints make the
problem in \(\bW\) definite, and Suitesparse's UMFPACK if the constraints result
in an indefinite problem \cite{suitesparse}.
The \(\bU\), \(\bP\) and \(\bLambda\) steps all decouple over elements, and can
thus be computed for each triangle/tetrahedron separately, in parallel.
This makes the algorithm highly parallelizable.
We implement parallelization using OpenMP.

\section{Convergence Analysis}
\label{sec:convergence}
Even though the proposed Algorithm \ref{alg:mainalgorithm} looks like the usual ADMM scheme, there is an important caveat:  the constraints in our formulation \eqref{eq:admm-os} is \emph{non-linear} and \emph{non-convex}.
As a result, standard ADMM convergence analysis \cite{Boyd2011}
does not apply. This distinguishes our method from many other computer graphics ADMM methods
discussed in \secref{relatedaugmentedlagrangian}.
As \equref{admm-os} is non-convex, a natural question is whether the ADMM will converge or not. 
We confirm that this is true in this section. 
Specifically, we show that, under certain conditions (which in practice often hold), the sequence \(\{(\bW^{(k)},\bU^{(k)},\bP^{(k)}, \bLambda^{(k)})\}_{k \ge 0 }\) generated by proposed ADMM method will
converge to a Karush–Kuhn–Tucker (KKT) point
\((\bW^{*},\bU^{*},\bP^{*},\bLambda^{*})\) of \equref{admm-os} that is defined by
the conditions
\begin{equation}\label{eq:kkt}
\left \{
\begin{aligned}
& w_i \nabla f(\bP_i^*) -\mu_i{\bU_i^*}^\transp\bLambda_i^*=0, \, \forall i,\\
& \partial g({\bU_i}^*)- \mu_i\bLambda_i^*{\bP_i^*}^\transp\ni 0, \, \forall i ,\\
& (G\bW^*)_i = {\bU_i}^* {\bP_i}^*, \,  \forall i.
\end{aligned}
\right.
\end{equation}
Here, $g(\bU_i)$ is the indicator function over the rotation matrix set $\SOn{d}$ and $\partial g(\cdot)$ is the limiting subdifferential. We refer to \cite{li2020understanding} for a recent review of stationarity in non-convex problems for the concrete definition.
The second condition of the KKT system can be written as
\({\bU_i}^* = \argmin_{\bU_i \in \SOn{d}} \left(\bU_i \cdot \bLambda_i^*{\bP_i^*}^\transp \right)\). 

We now present the convergence condition.
First, we introduce an explicit boundedness condition on the gradient
norm of \(f(\cdot)\) with respect to the sequence \(\{\bP^{(k)}\}\).

\begin{condition}\label{cond:gradf}
	The gradient of \(f\) evaluated at the iterates \((\bP^{(k)})_{k\ge0}\)
    generated by our ADMM algorithm is bounded, i.e.,
    \(\|\nabla f(\bP_i^{(k)})\| \leq B_i,~\forall i\).
 \end{condition}
This condition must be verified by the user when employing the method.
Such boundedness conditions are common for nonconvex ADMM
\cite{Zhang2021,zhang2019fundamental,gao2020admm}.
\figref{uvmappingbounds} shows that this bound holds in practice for a
variety of UV parametrization experiments, but it does not have to
hold for every problem (see \secref{limitations}).
This is because the energy function \(f\) is not globally, but only locally
Lipschitz continuous. 
However, the usual convergence results for general non-convex ADMM
(e.g., even with linear constraints) rely on a global Lipschitz
condition, which is the key for obtaining a bounded sequence for both primal and
dual variables.
Without global Lipschitz continuity it is challenging to bound the difference
between two iterates.
To address this, we build our convergence analysis on \condref{gradf}, which not
only enables us to use a local Lipschitz property, but also plays an important
role in providing us with an explicit bound for the penalty parameter. 
The bound is crucial for our practical implementation as well, as
elaborated in \secref{algorithm}. 

Our second condition lets us bound the difference between
\(\Lambda_i^{(k+1)}\) and \(\Lambda_i^{(k)}\).
\begin{condition}\label{cond:lambdabound}
    There exists a \(\gamma > 0\) such that 
    \begin{equation*}\begin{split}
        \norm{\Lambda_i^{(k+1)} - \Lambda_i^{(k)}}^2
        \leq \frac{\gamma}{4}
        \Big\lVert &\Lambda_i^{(k+1)}
        + U_i^{(k+1)}(\Lambda_i^{(k+1)})^\transp U_i^{(k+1)} - \Lambda_i^{(k)}
        - U_i^{(k)}(\Lambda_i^{(k)})^\transp U_i^{(k)} \Big\rVert^2
        \;\textrm{.}
    \end{split}\end{equation*}
\end{condition}
This condition is needed to be able to bound the \(\Lambda\) update by
the \(U\) and \(P\) updates.
Because of the symmetrization in the \(P\) update \equref{rootfindinginp}, we
lose information on the antisymmetric part of \(\Lambda\) -- this additional
condition allows us to control the antisymmetric part using only the symmetric
part.
\figref{uvmappingbounds} shows appropriate bounds \(\gamma\) for a few UV
parametrization applications.
We conjecture that \condref{lambdabound} may not be necessary to prove
convergence.

We leave the investigation of convergence behavior with weaker conditions to
future work.
These conditions are needed to enable our particular proof;
different proof strategies could use other conditions.
For many practical applications, however, our conditions are reasonable to
impose.
\figref{uvmappingbounds} shows that these conditions are realistic for our
UV parametrization examples presented in \figref{uvmapping}.
Both the gradient bound from \condref{gradf}, as well as the
\(\Lambda\) ratio from \condref{lambdabound} can be bounded during the
optimization.
\\

\begin{figure}
    \includegraphics[width=\textwidth]{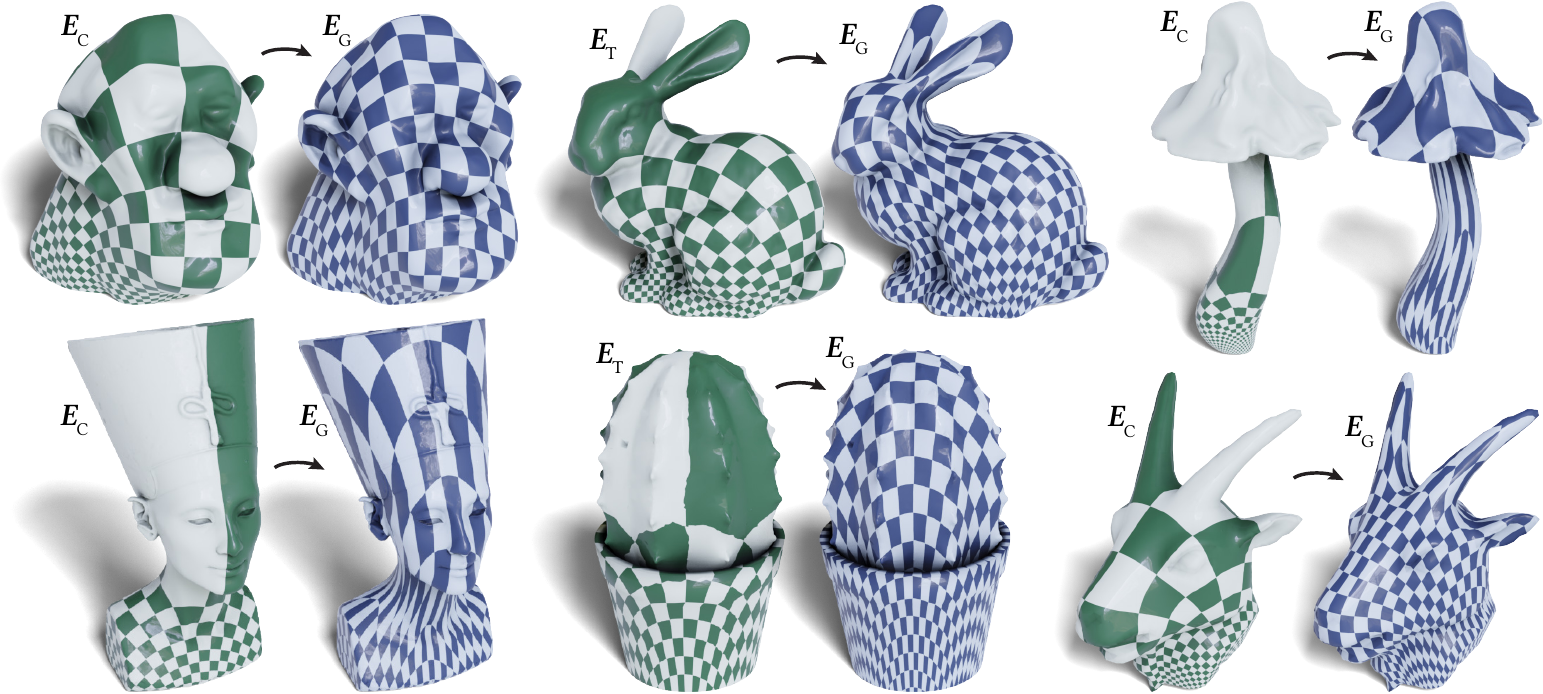}
    \caption{Optimizing \(\ESG\) with our splitting scheme to compute
    UV maps for a variety of surfaces.
    The surfaces are textured with a regular checkerboard texture.
    All generated UV maps are flip-free.
    The errors over the runtime of the optimizations are displayed in
    \figref{uvmappingbounds}.
    \label{fig:uvmapping}}
\end{figure}

Since ADMM algorithms are primal-dual methods, the crux of our convergence analysis is to use the primal variable $(\bW,\bU,\bP)$  to bound the dual update of $\bLambda$, leading to a sufficient decrease in the augmented Lagrangian function. To start, we derive an explicit local Lipschitz constant for various deformation energies, which is central to our convergence analysis. 
\begin{lemma}\label{lm:lipc}
    We have
	\[ \| \nabla f(\bP_i^{(k+1)}) -\nabla f(\bP_i^{(k)})\| \leq F_i \|\bP_i^{(k+1)}-\bP_i^{(k)}\|, \, \forall i.\]
\begin{itemize}
	\item For $f = \fSG$, we have $F_i = \left(1+\frac{\sqrt{d}}{{ C_i^{\text{L}} }^2}\right)$. 
	\item For $f = \fSD$, we have $F_i = \left(1+ \frac{3\sqrt{d}}{{ C_i^{LG}}^4}\right)$. 
\end{itemize}
The constants are given by
\begin{itemize}
    \item $C_i^L =-\frac{B_i}{2} +\frac{\sqrt{4+B_i^2}}{2} $; and
    \item
$C_i^{LG}$ is the positive root of the quartic equation, i.e., $x^4+B_ix^3-1=0$.
\end{itemize}
Moreover, $C_i^{L} \leq C_i^{LG} \leq 1$. 
\end{lemma}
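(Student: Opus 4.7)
The plan is to first translate \condref{gradf}'s Frobenius bound on $\norm{\nabla f(\bP_i^{(k)})}$ into a uniform lower bound on the smallest eigenvalue of every iterate $\bP_i^{(k)}$, and then use that eigenvalue bound to obtain a local Lipschitz constant by direct matrix-calculus estimates. The key observation is that both $\fSG$ and $\fSD$ have gradients that blow up as $\bP$ approaches the boundary of $\SPDn{d}$, so bounding the gradient forces $\bP$ to stay a definite distance from the singular set; the quantities $C_i^L$ and $C_i^{LG}$ are exactly the eigenvalue lower bounds this argument produces.

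For the eigenvalue step, I would compute $\nabla \fSG(\bP) = \bP - \bP^{-1}$ and $\nabla \fSD(\bP) = \bP - \bP^{-3}$ by standard matrix calculus on $\SPDn{d}$. Since $\bP$ is symmetric, the eigenvalues of $\nabla \fSG(\bP)$ are $\{\lambda_j - 1/\lambda_j\}$ and those of $\nabla \fSD(\bP)$ are $\{\lambda_j - 1/\lambda_j^3\}$, where $\{\lambda_j\}$ are the eigenvalues of $\bP$. The Frobenius bound $B_i$ dominates each individual eigenvalue, so $\lvert \lambda_j - 1/\lambda_j \rvert \leq B_i$ and $\lvert \lambda_j - 1/\lambda_j^3 \rvert \leq B_i$ respectively. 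Applied to $\lambda_{\min}$---the only case where $1/\lambda^k$ can be dangerous---the first inequality rearranges to $\lambda_{\min}^2 + B_i\lambda_{\min} - 1 \geq 0$, giving $\lambda_{\min} \geq C_i^L$; the second rearranges to $\lambda_{\min}^4 + B_i \lambda_{\min}^3 - 1 \geq 0$, giving $\lambda_{\min} \geq C_i^{LG}$. Large eigenvalues pose no issue because both constants are at most $1$.

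For the Lipschitz step, given $\lambda_{\min}(\bP_i^{(k)}) \geq C$ (with $C = C_i^L$ or $C = C_i^{LG}$), I have $\norm{(\bP_i^{(k)})^{-1}}_{\mathrm{op}} \leq 1/C$ and $\norm{(\bP_i^{(k)})^{-1}} \leq \sqrt{d}/C$. For $\fSG$ I write $\nabla \fSG(Y) - \nabla \fSG(X) = (Y - X) - (Y^{-1} - X^{-1})$, apply the resolvent identity $Y^{-1} - X^{-1} = X^{-1}(X-Y)Y^{-1}$, and use $\norm{ABC} \leq \norm{A}\cdot\norm{B}_{\mathrm{op}}\cdot\norm{C}_{\mathrm{op}}$ to bound the inverse difference by $(\sqrt{d}/C^2)\norm{X-Y}$, yielding $F_i = 1 + \sqrt{d}/(C_i^L)^2$. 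For $\fSD$ I expand $Y^{-3} - X^{-3}$ using $A^3 - B^3 = A^2(A-B) + A(A-B)B + (A-B)B^2$ with $A = Y^{-1}, B = X^{-1}$, apply the same resolvent bound to the $(A-B)$ factor in each of the three terms, and absorb the remaining two inverse factors in operator norm; summing the three contributions produces the extra factor of $3$ and two additional inverse-norm contributions, giving $F_i = 1 + 3\sqrt{d}/(C_i^{LG})^4$.

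Finally, for the ordering $C_i^L \leq C_i^{LG} \leq 1$, I would use the defining equation $(C_i^L)^2 + B_iC_i^L = 1$ (equivalently $C_i^L + B_i = 1/C_i^L$) to substitute into the quartic $q(x) = x^4 + B_ix^3 - 1$: a direct computation gives $q(C_i^L) = (C_i^L)^2 - 1 \leq 0$, and monotonicity of $q$ on $(0,\infty)$ yields $C_i^L \leq C_i^{LG}$. Evaluating $q(1) = B_i \geq 0$ gives $C_i^{LG} \leq 1$, and $C_i^L \leq 1$ follows similarly from the quadratic (using $B_i \geq 0$). The main obstacle I anticipate is not the eigenvalue analysis or norm arithmetic---which is essentially mechanical once the resolvent identity is in hand---but the careful apportioning of operator-norm versus Frobenius-norm factors in the triple product $\norm{X^{-1}(X-Y)Y^{-1}}$ so that one lands on the exact $\sqrt{d}/C^2$ constant stated in the lemma rather than a looser $d/C^2$ or a tighter $1/C^2$ that alternative splittings would yield.
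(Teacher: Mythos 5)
Your proposal is correct and follows essentially the same route as the paper's proof: diagonalize $\bP_i$, translate the Frobenius gradient bound into a lower bound $\lambda_{\min}(\bP_i)\ge C$ via the scalar inequalities $\lambda^2+B_i\lambda-1\ge 0$ (resp.\ $\lambda^4+B_i\lambda^3-1\ge 0$), and then bound $\|\nabla f(Y)-\nabla f(X)\|$ using the resolvent identity together with a cubic-difference factorization for $\fSD$, distributing one Frobenius and the remaining operator norms so a single $\sqrt d$ appears. The one place you genuinely improve on the paper is the ordering $C_i^L\le C_i^{LG}\le 1$: the paper supports this only by pointing to a plot, whereas your substitution of $(C_i^L)^2+B_iC_i^L=1$ into the quartic to get $q(C_i^L)=(C_i^L)^2-1\le 0$, combined with the monotonicity of $q$ on $(0,\infty)$ and $q(1)=B_i\ge 0$, is a short algebraic proof of the same fact.
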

Based on Lemma \ref{lm:lipc} establishing an explicit local Lipschitz constant,
we can now derive a few basic properties of our ADMM algorithm. 
\begin{proposition}[Sufficient decrease property]
	\label{prop:admm}\\
Suppose that \(\mu_i>  \frac{1}{2}\left(-(w_i-2\epsilon) + \sqrt{(w_i-2\epsilon)^2+16\gamma w_i^2F_i^2}\right)\), \(h_i \ge \frac{4\gamma w_i^2B_i^2}{{\mu_i}} + 2\epsilon\), and \(0<\epsilon<\frac{\min_iw_i}{2} \).
Let \(\{(\bW^{(k)}, \bU^{(k)}, \bP^{(k)}, \bLambda^{(k)})\}_{k= 0}^\infty\) be
the sequence of iterates generated by our ADMM algorithm, and denote
\(\Phi(\bW^{(k)}, \bU^{(k)}, \bP^{(k)}, \bLambda^{(k)})\) by \(\Phi^k\).
 Then
\begin{equation*}
 	\begin{aligned}
 	\Phi^{k+1}-\Phi^k \leq& -\frac{1}{2}\lambda_{\text{min}}(L)\|
 	\bW^{(k+1)}-\bW^{(k)}\|^2\\
 	&-\sum_{t=1}^m \epsilon\left(\|\bU_i^{(k+1)}-\bU_i^{(k)}\|^2+\|\bP_i^{(k+1)}-\bP_i^{(k)}\|^2\right). 
 	\end{aligned}
 	\end{equation*}
\end{proposition}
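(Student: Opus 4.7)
The plan is to decompose $\Phi^{k+1}-\Phi^k$ telescopically into the four block updates,
\begin{equation*}
\Phi^{k+1}-\Phi^k = \Delta_{\bW} + \Delta_{\bU} + \Delta_{\bP} + \Delta_{\bLambda},
\end{equation*}
to bound each primal change above by a negative multiple of the squared step in its block, to compute the dual change $\Delta_{\bLambda}$ exactly, and finally to absorb $\Delta_{\bLambda}$ into the three primal decreases using the stated conditions on $\mu_i$ and $h_i$.

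The three primal bounds are standard consequences of each subproblem's structure. For $\Delta_{\bW}$, the function $\Phi(\,\cdot\,,\bU^{(k)},\bP^{(k)},\bLambda^{(k)})$ is a convex quadratic in $\bW$ with Hessian $L$, and $\bW^{(k+1)}$ is its minimizer on the affine feasible set $\{\bW : A\bW = b\}$ containing $\bW^{(k)}$; the second-order expansion at $\bW^{(k+1)}$ combined with the constrained optimality condition yields $\Delta_{\bW} \leq -\tfrac12 \lambda_{\min}(L) \|\bW^{(k+1)}-\bW^{(k)}\|^2$. For $\Delta_{\bU}$, the optimality of $\bU^{(k+1)}$ for $\Phi + p$ together with $p(\bU^{(k)},\bU^{(k)}) = 0$ yields $\Delta_{\bU} \leq -\sum_i \tfrac{h_i}{2}\|\bU^{(k+1)}_i-\bU^{(k)}_i\|^2$. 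For $\Delta_{\bP}$, the quadratic term of $\Phi$ is $\mu_i$-strongly convex in $\bP_i$ (using $\|\bU_i\bP_i\| = \|\bP_i\|$ since $\bU_i\in\SOn{d}$), and both $\fSG$ and $\fSD$ are $1$-strongly convex on $\SPDn{d}$ (the $\tfrac12\|P\|^2$ piece contributes the identity to the Hessian while the remaining pieces contribute PSD Hessians), so $\Delta_{\bP} \leq -\sum_i \tfrac{w_i+\mu_i}{2}\|\bP^{(k+1)}_i-\bP^{(k)}_i\|^2$. For $\Delta_{\bLambda}$, substituting the dual update $\bLambda^{(k+1)}_i - \bLambda^{(k)}_i = (G\bW^{(k+1)})_i - \bU^{(k+1)}_i\bP^{(k+1)}_i$ into the quadratic term of $\Phi$ collapses algebraically to $\Delta_{\bLambda} = \sum_i \mu_i \|\bLambda^{(k+1)}_i - \bLambda^{(k)}_i\|^2$.

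The main obstacle is absorbing this positive $\Delta_{\bLambda}$ into the three primal decreases. I would start from the first-order condition \eqref{eq:rootfindinginp} of the $\bP$ subproblem, substitute $(G\bW^{(k+1)})_i + \bLambda^{(k)}_i = \bLambda^{(k+1)}_i + \bU^{(k+1)}_i\bP^{(k+1)}_i$ from the Lagrange update, and cancel the $\mu_i \bP^{(k+1)}_i$ terms, producing the key identity
\begin{equation*}
w_i \nabla f(\bP^{(k+1)}_i) = \mu_i \operatorname{symm}\!\bigl(\bU^{(k+1)\transp}_i \bLambda^{(k+1)}_i\bigr).
\end{equation*}
Pre-multiplying by $2\bU^{(k+1)}_i$ rewrites this as
\begin{equation*}
\bLambda^{(k+1)}_i + \bU^{(k+1)}_i \bLambda^{(k+1)\transp}_i \bU^{(k+1)}_i = \tfrac{2w_i}{\mu_i}\, \bU^{(k+1)}_i \nabla f(\bP^{(k+1)}_i),
\end{equation*}
with the analogous identity at step $k$. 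Subtracting these, invoking \condref{lambdabound} to pass from the left-hand side to $\|\bLambda^{(k+1)}_i - \bLambda^{(k)}_i\|^2$, splitting the right-hand side by the triangle inequality, and then applying the local Lipschitz bound of Lemma~\ref{lm:lipc} for the $\bP$ piece, the gradient bound of \condref{gradf} for the $\bU$ piece, the isometry of rotations, and $(a+b)^2 \leq 2a^2 + 2b^2$, I arrive at
\begin{equation*}
\mu_i \|\bLambda^{(k+1)}_i - \bLambda^{(k)}_i\|^2 \leq \tfrac{2\gamma w_i^2}{\mu_i}\!\left(F_i^2 \|\bP^{(k+1)}_i - \bP^{(k)}_i\|^2 + B_i^2 \|\bU^{(k+1)}_i - \bU^{(k)}_i\|^2\right).
\end{equation*}

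Combining the four bounds, the coefficient of $\|\bP^{(k+1)}_i - \bP^{(k)}_i\|^2$ in $\Phi^{k+1} - \Phi^k$ becomes $-\tfrac{w_i+\mu_i}{2} + \tfrac{2\gamma w_i^2 F_i^2}{\mu_i}$ and the coefficient of $\|\bU^{(k+1)}_i - \bU^{(k)}_i\|^2$ becomes $-\tfrac{h_i}{2} + \tfrac{2\gamma w_i^2 B_i^2}{\mu_i}$. Requiring both to be at most $-\epsilon$ is equivalent to the quadratic inequality $\mu_i^2 + (w_i - 2\epsilon)\mu_i - 4\gamma w_i^2 F_i^2 \geq 0$ and the linear inequality $h_i \geq \tfrac{4\gamma w_i^2 B_i^2}{\mu_i} + 2\epsilon$, which match the stated hypotheses exactly; the restriction $\epsilon < \min_i w_i / 2$ keeps these bounds well-posed. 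The sufficient-decrease inequality then follows.
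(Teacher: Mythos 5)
Your proposal follows essentially the same argument as the paper's proof: the same four-term telescoping decomposition, the same exact computation of the dual increment $\sum_i \mu_i\|\bLambda_i^{(k+1)}-\bLambda_i^{(k)}\|^2$, the same key identity $w_i\nabla f(\bP_i^{(k+1)}) = \mu_i\operatorname{symm}(\bU_i^{(k+1)\transp}\bLambda_i^{(k+1)})$ obtained from the $\bP$-optimality condition and the $\bLambda$-update, the same use of Conditions~\ref{cond:gradf}--\ref{cond:lambdabound} and Lemma~\ref{lm:lipc} to absorb the dual increment, and the same final quadratic inequality yielding the stated conditions on $\mu_i$ and $h_i$. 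The only differences are minor elaborations (you spell out why $f$ contributes $w_i$-strong convexity and why $\lambda_{\min}(L)$ governs the $\bW$ step) that the paper states more tersely but does not change in substance.
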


\begin{figure}
    \includegraphics[width=\linewidth]{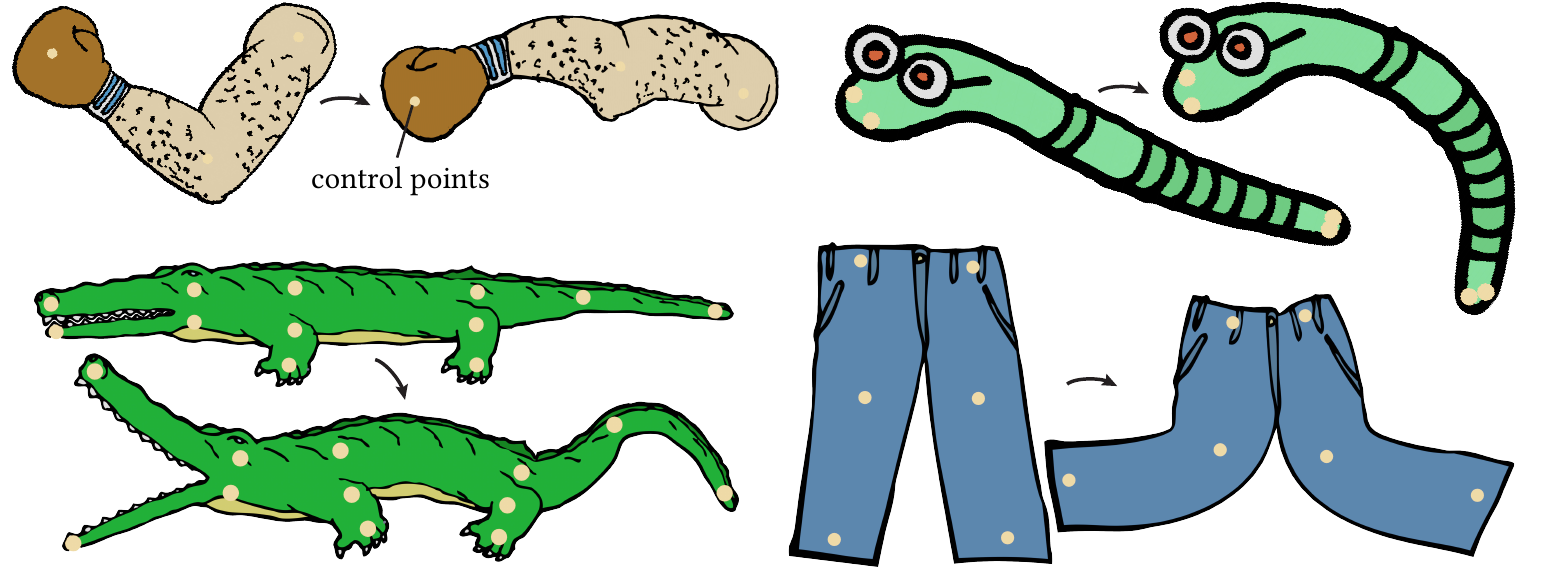}
    \caption{Computing low-distortion deformation of surfaces in \(\Ro^2\)
    by minimizing \(\ESD\), initializing with the identity map.
    The highlighted control points in the target mesh are fixed to the desired
    position, and our method is employed to minimize distortion.
\label{fig:deformationtwod}}
\end{figure}

We are now ready to prove a global convergence result for our ADMM algorithm
by characterizing the cluster point of the generated sequence.
\begin{theorem}[Global convergence of our splitting method]
	\label{main-theorem}
    If the set of KKT solutions for \equref{admm-os} that satisfy \equref{kkt}
    is non-empty, then
    the augmented Lagrangian function \(\Phi(\bW,\bU,\bP,\bLambda)\) is a
    Kurdyka-\L{}ojasiewicz function, and hence the sequence generated by
    \algoref{mainalgorithm}, \linebreak
    \(\{(\bW^{(k)}, \bU^{(k)}, \bP^{(k)}, \bLambda^{(k)})\}_{k=0}^\infty\),
    converges to a KKT point of \equref{admm-os}.
\end{theorem}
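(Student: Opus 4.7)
The plan is to follow the now-standard Attouch–Bolte–Svaiter recipe for proving global convergence of descent-type algorithms on Kurdyka–\L{}ojasiewicz (K\L) functions, adapted to the three-block non-convex ADMM at hand. Concretely, I would verify four ingredients for the sequence \(\{z^{(k)}\} := \{(\bW^{(k)},\bU^{(k)},\bP^{(k)},\bLambda^{(k)})\}\) and the potential \(\Phi\): \emph{(i)} a sufficient decrease of \(\Phi\) between consecutive iterates, \emph{(ii)} a relative error (subgradient) bound \(\operatorname{dist}(0,\partial\Phi(z^{(k+1)})) \le c\,\|z^{(k+1)}-z^{(k)}\|\), \emph{(iii)} boundedness of \(\{z^{(k)}\}\) (so a cluster point exists), and \emph{(iv)} the K\L\ property of \(\Phi\) at every cluster point. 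Items \emph{(i)}–\emph{(iv)} imply that \(\sum_k\|z^{(k+1)}-z^{(k)}\|<\infty\), hence \(\{z^{(k)}\}\) is a Cauchy sequence with a unique limit, and a standard limiting argument in the first-order optimality conditions of each subproblem then identifies this limit with a KKT point of \eqref{eq:admm-os}.

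First I would invoke Proposition~\ref{prop:admm} directly for \emph{(i)}: it gives a strict decrease controlled by \(\|\bW^{(k+1)}-\bW^{(k)}\|^2+\sum_i(\|\bU_i^{(k+1)}-\bU_i^{(k)}\|^2+\|\bP_i^{(k+1)}-\bP_i^{(k)}\|^2)\). To close this into the full norm \(\|z^{(k+1)}-z^{(k)}\|^2\), I would use \condref{lambdabound} together with the \(\bLambda\)-update \eqref{eq:lambdaupdate} to bound \(\|\bLambda_i^{(k+1)}-\bLambda_i^{(k)}\|\) by the symmetric residual \(U_i^\transp(G\bW)_i-P_i\) (after symmetrization), which is itself controlled by the \(\bP\)-subproblem's first-order condition \eqref{eq:rootfindinginp} and therefore by \(\|\bP_i^{(k+1)}-\bP_i^{(k)}\|\) (using Lemma~\ref{lm:lipc}). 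Combined with the Procrustes optimality for \(\bU\) and the linear solve for \(\bW\), the same residual bound delivers \emph{(ii)}: writing the limiting subdifferential of \(\Phi\) componentwise, each component equals a telescoping combination of the optimality conditions of the three subproblems, which are in turn linear combinations of \(\bW^{(k+1)}-\bW^{(k)}\), \(\bU^{(k+1)}-\bU^{(k)}\), \(\bP^{(k+1)}-\bP^{(k)}\), and \(\bLambda^{(k+1)}-\bLambda^{(k)}\).

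For \emph{(iii)} I would exploit \(\bU_i^{(k)} \in \SOn{d}\) (automatically bounded), use coercivity of \(f\) on \(\SPDn{d}\) (both \(\fSG\) and \(\fSD\) blow up on the boundary of \(\SPDn{d}\) and grow superlinearly at infinity) to bound \(\{\bP^{(k)}\}\), and then use the sufficient decrease of \(\Phi^k\) plus \condref{gradf} to bound \(\{\bLambda^{(k)}\}\) through the \(\bP\)-stationarity \(w_i\nabla f(\bP_i^{(k)}) = \mu_i\,\operatorname{symm}(\bU_i^\transp((G\bW)_i+\bLambda_i-\bU_i\bP_i))\); finally, \(\{\bW^{(k)}\}\) is bounded because \(L\bW^{(k)}=G^\transp r^{(k)}\) with bounded right-hand side, modulo the trivial constant-mode which is pinned by the linear constraint \(A\bW=b\) or by fixing a single vertex.

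The hard part, and the one I would want to spend real effort on, is \emph{(iv)}: proving that \(\Phi\) is a K\L\ function. The augmented-Lagrangian terms are polynomial in their arguments and the constraint \(\bU_i\in\SOn{d}\) defines a real algebraic set, so the non-trivial question is the defining function \(f\). For \(\fSG(P)=\tfrac12\|P\|^2-\log\det P\) and \(\fSD(P)=\tfrac12(\|P\|^2+\|P^{-1}\|^2)\), both are real-analytic on the open cone \(\SPDn{d}\), so \(\Phi\) is real-analytic on the relative interior of its domain and therefore K\L\ at every point there by \L{}ojasiewicz's classical inequality; at boundary points \(P\) either approaches the boundary of \(\SPDn{d}\) (but then \(\Phi\to\infty\) and cluster points cannot sit there) or lives on the semi-algebraic set \(\SOn{d}\), on which the indicator is semi-algebraic, so the sum remains K\L\ by standard closure properties. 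With \emph{(i)}--\emph{(iv)} in hand, the Attouch–Bolte–Svaiter framework (see also \cite{gao2020admm} for the ADMM specialization) yields global convergence of \(\{z^{(k)}\}\) to a single limit \(z^*\). Taking limits in \eqref{eq:solvinginw}, the Procrustes optimality \eqref{eq:optimizationinudecoupled}, and \eqref{eq:rootfindinginp} gives exactly the KKT system \eqref{eq:kkt}, completing the proof.
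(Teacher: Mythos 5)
Your proposal follows the same four-ingredient Attouch--Bolte--Svaiter template (sufficient decrease, relative-error subgradient bound, boundedness, K\L{} property at cluster points, then invoke the abstract convergence theorem) that the paper uses, so at the level of structure you and the paper are doing the same thing. Two small divergences are worth flagging. First, in (iii) you propose bounding \(\{\bP^{(k)}\}\) by coercivity of \(f\) together with the decrease of \(\Phi^k\); but \(\Phi\) contains the term \(-\tfrac{\mu_i}{2}\|\bLambda_i\|^2\), so boundedness of \(\Phi^k\) does not by itself control \(f(\bP_i^{(k)})\) until \(\bLambda\) has already been bounded, and you only bound \(\bLambda\) afterward. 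The paper avoids this chicken-and-egg by bounding \(\bP^{(k)}\) \emph{directly} from \condref{gradf} (the gradient bound \(\|\nabla f(\bP_i)\|\le B_i\) forces the eigenvalues of \(\bP_i\) into a compact interval away from \(0\) and \(\infty\) for both \(\fSG\) and \(\fSD\)); since you already invoke \condref{gradf} in the same breath, this is an ordering slip rather than a missing idea. Second, for (iv) you argue K\L{} via real-analyticity of \(f\) on the open cone plus \L{}ojasiewicz's classical inequality, whereas the paper uses strong convexity of \(w_i f(\bP_i)\) (the \(\tfrac12\|P\|^2\) term gives a uniform Hessian lower bound) together with semi-algebraicity of the quadratic penalty and of the \(\SOn{d}\) indicator; both are valid closure routes to K\L{}, and neither is clearly simpler than the other. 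One bookkeeping point: both your subgradient bound in (ii) and the K\L{} claim in (iv) should be stated for the \emph{extended} augmented Lagrangian \(G = \Phi + \sum_i g(\bU_i)\) (with \(g\) the indicator of \(\SOn{d}\)), not for \(\Phi\) alone, since the normal cone of \(\SOn{d}\) must appear in the optimality conditions; you implicitly account for this when you discuss the indicator's semi-algebraicity, but it should be explicit.
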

For a discussion on Kurdyka-\L{}ojasiewicz functions,  we refer the reader to
\cite{Bolte2007} for details.
The proofs for the statements in this section are given in
\appref{appendixproof}.

\section{Implementation}
\label{sec:algorithm}

\algoref{mainalgorithm} is the basic algorithm underlying our method, and we can
analyze its properties theoretically (see \secref{convergence}).
The method we implement in code is slightly different, taking practical
considerations into account.

\subsubsection*{Termination Condition}
There is no termination condition provided in \algoref{mainalgorithm}.
We support a variety of termination conditions, such as a target energy, or
no flipped triangles present (see \figref{unflippingtriangles}).
The general-purpose termination condition, which we use in all experiments
unless indicated, is a modified version of the primal and dual augmented
Lagrangian errors \cite[Section 3.3.1]{Boyd2011}, adjusted for our setting:
\begin{equation}\begin{split}\label{eq:primaldualerror}
    \left(e^{\textrm{prim}}_i\right)^2 &=
    \norm{(G\bW^{(k)})_i - \bU^{(k)}_i \bP^{(k)}_i}^2 , \\
    \left(e^{\textrm{prim}}\right)^2 &=
    \sum_{i=1}^m \left(e^{\textrm{prim}}_i\right)^2 , \\
    \left(e^{\textrm{dual}}_i\right)^2 &=
    \mu_i^2 \norm{(G\bW^{(k)})_i - (G\bW^{(k-1)})_i}^2 \\
    \left(e^{\textrm{dual}}\right)^2 &=
    \sum_{i=1}^m \left(e^{\textrm{dual}}_i\right)^2
    \;\textrm{.}
\end{split}\end{equation}
We terminate the method if both of these errors are below the thresholds
\begin{equation}\begin{split}\label{eq:primaldualthresholds}
    e^{\textrm{prim}} &< \varepsilon_\textrm{abs} \sqrt{dm} +
    \varepsilon_\textrm{rel}
    \max\left\{\norm{G\bW^{(k)}}, \norm{\bP^{(k)}}\right\} , \\
    e^{\textrm{dual}} &< \varepsilon_\textrm{abs} \sqrt{dm} +
    \varepsilon_\textrm{rel} \norm{G^\transp \bLambda^{(k)}}
    \textrm{.}
\end{split}\end{equation}
For most examples throughout this article, we choose
\(\varepsilon_\textrm{abs} = 10^{-6}, \varepsilon_\textrm{rel} = 10^{-5}\).
For deformation experiments, we use
\(\varepsilon_\textrm{abs} = 5 \cdot 10^{-10}, \varepsilon_\textrm{rel} = 5 \cdot 10^{-9}\).
Additionally, we add to the termination condition that all elements in the
iterate \(\bW^{(k)}\) have to be flip-free.

\begin{figure}
    \includegraphics[width=\linewidth]{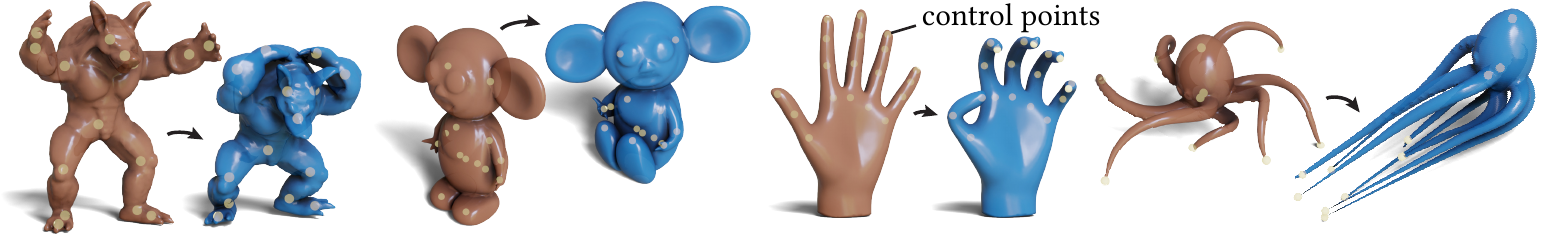}
    \caption{Computing low-distortion deformation of volumes in \(\Ro^3\)
    by minimizing \(\ESD\), initializing with the identity map.
    The highlighted control points in the target mesh are fixed to the desired
    position, and our method is employed to minimize distortion.
\label{fig:deformationthreed}}
\end{figure}

\subsubsection*{Rescaling \(\mu_i\) and \(\bLambda\)}
To speed up the optimization, we dynamically adjust the penalty
parameters \(\mu_i\)
\cite[Section 3.4.1]{Boyd2011}.
The goal of the rescaling algorithm is to keep \(e^{\textrm{prim}}\) and
\(e^{\textrm{dual}}\) from \equref{primaldualerror} roughly equal.
Thus,
\begin{itemize}
    \item if \(e_i^{\textrm{prim}} > \rho e_i^{\textrm{dual}}\), we multiply
    \(\mu_i\) by \(\frac{\rho}{2}\) and divide \(\bLambda^{(k)}_i\) by
    \(\frac{\rho}{2}\);
    \item if \(e_i^{\textrm{dual}} > \rho e_i^{\textrm{prim}}\), we divide
    \(\mu_i\) by \(\frac{\rho}{2}\) and multiply \(\bLambda^{(k)}_i\) by
    \(\frac{\rho}{2}\),
\end{itemize}
where, in our implementation, we set \(\rho = 5\).
We employ a lower bound for \(\mu_i\) of \(\frac12 \mu_{\min}\),
where \(\mu_{\min}\) is the bound from \secref{convergence}
computed with \(B_i = \sqrt{5(1 + \normu{\nabla f(\bP^{(k)}_i)}^2)}\),
and \(F_i^2\) is capped at \(\varepsilon_m^{-1/4}\)
(\(\varepsilon_m\) is the floating point machine epsilon).
The changing
\(\mu_i\) and \(B_i\) as well as
the lower bound of \(\frac12 \mu_{\min}\)
do not reflect the conditions of our convergence proof,
but such heuristic modifications to ADMM algorithms for actual
implementations are used in practice:
See, e.g., \cite[Section 3.4.1]{Boyd2011} for rescaling,
and \cite[Section 6.1]{Vaswani2019} for an example of gradient descent step
sizes larger than suggested by the Lipschitz continuity bound
(which is also where our \(\mu_{\min}\) originates).

Since changing the penalties \(\mu_i\) requires decomposition of
\(L\), we rescale sparingly:
five times directly after initialization, and
every \(5 (\frac32)^p\) iterations, where \(p\) is the number of past
    rescaling events.
This way we rescale more in the beginning of the optimization, when the iterates
are changing a lot, and less afterwards, when the iterates are not changing as
much anymore.
The initial penalty parameters (before rescaling) are set to
\(\mu_i = w_i\).
\(h_i\) is set to its value from \secref{convergence}, and disabled if we expect
many flipped triangles in the input.
\(\gamma\) is always set to \(1\) and \(\epsilon\) to \(0\).

\subsubsection*{Initializing \(\bU, \bP, \bLambda\)}
The user does not need to supply \(\bU^{(0)}, \bP^{(0)}, \bLambda^{(0)}\).
We can use the supplied \(\bW^{(0)}\) to initialize \(\bU, \bP\) by employing
a polar decomposition,
\begin{equation}\label{eq:initialpolardecomposition}
    (G \bW^{(0)})_i = \bU^{(0)}_i \bP^{(0)}_i
    \;\textrm{.}
 \end{equation}
This is, in practice, computed using the singular value decomposition (svd).
For \linebreak \((G \bW^{(0)})_i = R_1 \Sigma R_2^\transp\), we set
\begin{equation}\begin{split}\label{eq:polardecompositionapplied}
    \bU^{(0)}_i &= R_1 R_2^\transp \\
    \bP^{(0)}_i &= R_2 \Sigma R_2^\transp
    \;\textrm{.}
\end{split}\end{equation}
If \((G \bW^{(0)})_i\) is flipped, then \(R_1 R_2^\transp\) is not a
rotation matrix. In this case we multiply its last column by \(-1\) such
that \(\det\bU^{(0)}_i = 1\), and set \(\Sigma = \varepsilon I\).
To avoid numerical problems, we also set all values on the diagonal of
\(\Sigma\) to \(\varepsilon\) should they be smaller.
We set \(\varepsilon = \varepsilon_m^{1/4}\) when optimizing \(\ESG\), and
\(\varepsilon = \varepsilon_m^{1/8}\) when optimizing \(\ESD\), where
\(\varepsilon_m\) is the machine epsilon.
\(\Lambda^{(0)}\) is set to \(0\).
\\

\begin{figure}
    \includegraphics[width=\linewidth]{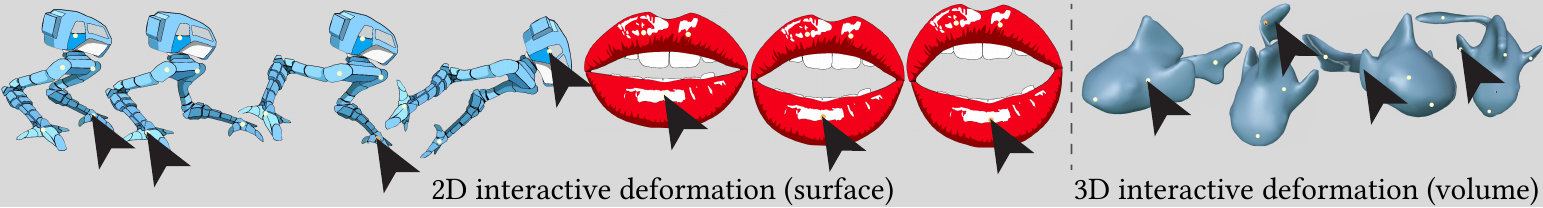}
    \caption{
    Since each step of our iteration method can be
    evaluated cheaply, the method is well-suited for interactive deformation.
    In our tool, the user picks which vertices they would like to constrain, and
    then drags them around while the shape deforms.
    For a video of this interactive application, see supplemental material.
    \label{fig:interactivedeformation}}
\end{figure}

\begin{algorithm}
\caption{Our implementation of \algoref{mainalgorithm}
\label{alg:actualalgorithm}}
\begin{adjustwidth}{-0.5em}{}
\begin{algorithmic}[1]
    \Method{SplittingOptimization}{\bW^{(0)}}
    \State \(\bU^{(0)}, \bP^{(0)} \gets
    \textrm{polar\textunderscore{}decomposition} ( \bW^{(0)} ) \)
    \State \(\bLambda^{(0)} \gets 0\)
    \State \(\mu_i \gets \max(\mu_{\min,i}, w_i) \quad \forall i\)
    \State \(\operatorname{decompose}(L)\)
    \For{\(k \gets 1, \dots, \textrm{max\textunderscore{}iter}\)}
        \State \(\!\! \bW^{(k)} \gets \argmin_{\bW\!, A\bW=b} \,
        \Phi\left(\bW, \bU^{(k-1)}, \bP^{(k-1)}, \bLambda^{(k-1)}\right) \)
        \State \(\bU^{(k)} \gets \argmin_{\bU} \,
        \Phi\left(\bW^{(k)}, \bU, \bP^{(k-1)}, \bLambda^{(k-1)}\right)
        \, + \, p\left(\bU,\bU^{(k-1)}\right) \)
        \State \(\, \bP^{(k)} \gets \argmin_{\bP} \,
        \Phi\left(\bW^{(k)}, \bU^{(k)}, \bP, \bLambda^{(k-1)}\right) \)
        \State \(\mkern 1mu \bLambda^{(k)}_i \gets \bLambda^{(k-1)}_i
        +\, ( G \bW^{(k)} )_i - \bU_i^{(k)} \bP_i^{(k)} \quad \forall i\)
        \If{rescale\textunderscore{}at\textunderscore{}iter(k)}
            \State \(\mu_i \gets \operatorname{rescale}(\mu_i,
            \bW^{(k)}, \bU^{(k)}, \bP^{(k)}, \bLambda^{(k)})
            \quad \forall i\)
            \State \(\operatorname{decompose}(L)\)
        \EndIf
        \If{termination\textunderscore{}condition\((\bW^{(k)},
        \bU^{(k)}, \bP^{(k)}, \bLambda^{(k)})\)}
            \Break \medskip
        \EndIf
    \EndFor
\end{algorithmic}
\end{adjustwidth}
\end{algorithm}

Our complete practical method is described in pseudocode form in
\algoref{actualalgorithm}.
We use IEEE double precision as our floating point type.
The actual \texttt{C++} implementation built on libigl \cite{libigl}
will be publicly released under an open-source license after publication.
We run our implementation on a 2.4GHz Quad-Core Intel i5 MacBook Pro with 16GB
RAM.

The user needs to initialize our method with a map to a target mesh \(\bW^{0}\).
This map does not need to be flip-free,
however it can not be arbitrary (see \secref{limitations}).
We initialize, depending on the application, either with minimizers of
\(\ETutte\) or \(\EConformal\), which can be optimized very cheaply.
\figref{uvmappingbounds} shows the primal and dual errors over the entire
optimization, as well as proxies for Conditions \ref{cond:gradf} \&
\ref{cond:lambdabound}, for applications of
\algoref{actualalgorithm} to UV parametrization.

\section{The Symmetric Gradient Energy}
\label{sec:symmetricgradient}

As a brief aside in the larger story of our optimization algorithm, we propose an
alternative to the distortion energies mentioned in
\appref{deformationenergies}, which we call the
\emph{symmetric gradient energy}.
Although our main method applies to a broad class of distortion energies,
we find that this alternative energy yields maps with favorable properties.

\begin{figure}
    \includegraphics[width=\linewidth]{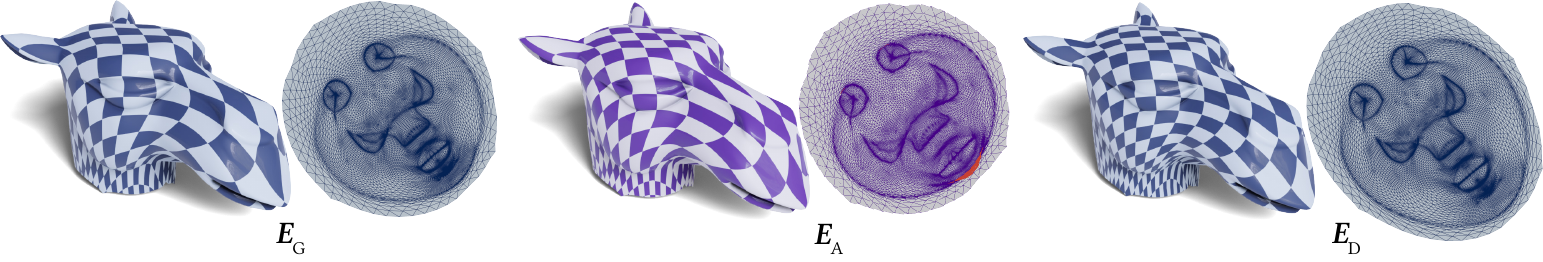}
    \caption{\(\ESG\) \figloc{left} exhibits more qualitative similarity to
    \(\EARAP\) \figloc{center} than \(\ESD\) \figloc{right}.
    The \(L^2\) area distortions are, from left to right,
    \(0.000120\), \(0.000103\), \(0.000213\)
    (flipped triangles in red).
    \label{fig:symmetricgradientandarap}}
\end{figure}

While \(\fSD\) is
symmetric with respect to inversion of its input matrix \(X\), its gradient
\equref{symmetricdirichletdefininggrad} is not.
Moreover, the singularity in the gradient is rather strong (\(\sim 1/x^3\)).

\begin{definition}[Symmetric gradient energy]
	The defining function \(\fSG : \Rnm{d}{d} \rightarrow \Ro\) of the symmetric
    gradient energy is given by
	\begin{equation}\label{eq:symmetricgradientdefiningfunction}
		\fSG(X) \coloneqq \frac12 \norm{X}^2 - \log\det X
		\;\textrm{.}
	\end{equation}
    The symmetric gradient energy \(\ESG\) is defined as the flip-free generic
    distortion energy \equref{genericdeformationenergy} with \(f=\fSG\).
\end{definition}

Just like \(\ESD\), \(\ESG\) is continuous and bounded for all flip-free
maps.
\(\ESG\) is singular exactly when \(\ESD\) is, just with a weaker singularity of
\(\sim \log x\) instead of \(\sim 1/x\) (see also \appref{symmetricdirichlet}).
\(\ESG\) is also invariant to rotations, as for any rotation matrix \(U\),
\(\fSG(UX) = \fSG(X)\).

To our knowledge, \(\ESG\) has not appeared in previous work on distortion
energies in this form.
\(\ESG\) is, however, similar to other alternatives, chiefly
among them the norm of the Hencky strain tensor \cite{Hencky1928}
\(\norm{\log X^T X}^2\), which also features a logarithmic term.
Other previous works discuss a strain energy that consists of only the logarithmic
term of \(\fSG\) \cite{Neff2016}.
\(\ESG\) is, of course, also similar to \(\ESD\), which replaces the logarithmic
term of \(\ESG\) with an inverse term.

The gradient of \(\ESG\)'s defining function is given by
\begin{equation}\label{eq:symmetricgradientdefininggrad}
	\nabla\fSG(X) = X - X^{-\transp}
	\;\textrm{.}	
\end{equation}
As \(X\) appears both as itself and as its inverse in the gradient of
\(\fSG\), we call this energy the symmetric gradient energy to parallel the
symmetric Dirichlet energy, where both appear in the function itself.
The singularity in the gradient (\(\sim 1/x\)) is weaker than \(\ESD\)'s
(\(\sim 1/x^3\)).

\(\ESG\) can be an alternative to other distortion energies in certain
applications, depending on specific goals.
Minimizers of \(\ESG\) look more visually similar to the popular non-flip-free
minimizers of \(\EARAP\) than minimizers of \(\ESD\).
If the goal is a maximally rigid map that is still flip-free, \(\ESG\) is a
valid choice.
Additionally, minimizers of \(\ESG\) can exhibit lower
area distortion than minimizers of \(\ESD\), yielding a more faithful map in
terms of area (see \figref{symmetricgradientandarap}).
Because of these properties, we
prefer using \(\ESG\) for UV parametrization and volume correspondence
(where the lower area distortion is a key feature),
and \(\ESD\) for deformation (where the weaker singularity of \(\ESG\) can
lead to distorted elements near the constrained parts of the mesh).

\section{Results}
\label{sec:results}

We use our splitting scheme to optimize distortion energies
for three different applications:
UV mapping, shape deformation, and volume correspondence.

\subsection{UV Maps}
\label{sec:uvmaps}

A UV map of a triangle mesh \(\bV \subseteq \Ro^3\) is a map from
\(\bV\) into \(\Ro^2\), the UV space.
UV maps have a variety of applications, such as
texturing surfaces \cite[Section 6.4]{Akeley2013},
quad meshing \cite{Bommes2009},
machine learning on meshes \cite{Maron2017},
and more \cite{Pal2014,Schmidt2019}.
We can compute a UV map by minimizing the distortion of a map from \(\bV\)
into UV space.
For applications such as texture mapping, it is
especially
important to have a low-distortion UV map, since high distortion
will require higher-resolution images for texturing.
Figures \ref{fig:teaser}, \ref{fig:uvmapping} show our splitting method used to
minimize \(\ESG\) to arrive at a UV map.
The surfaces are textured using a regular checkerboard texture to
visualize the distortion of the UV map;
the rendered checkerboard scale is manually set to attain qualitatively similar
triangle sizes.
In \figref{symmetricdirichletinoursandprevious} and \tableref{comparisonwithprevious}
we compare our method with multiple previous works when optimizing \(\ESD\).
We can see that for some examples the meshes are so challenging (as they
contain a lot of branches and appendages that get squished down into a small
area) that the previous methods were unable to produce a flip-free
distortion minimizing optimization -- an area where our method's robustness to
flipped triangles enables us to compute results despite the difficulty.
\figref{dataset} shows our method applied to a large parametrization data set.

\subsection{Deformation}
\label{sec:deformation}

\begin{figure}
    \includegraphics[width=\textwidth]{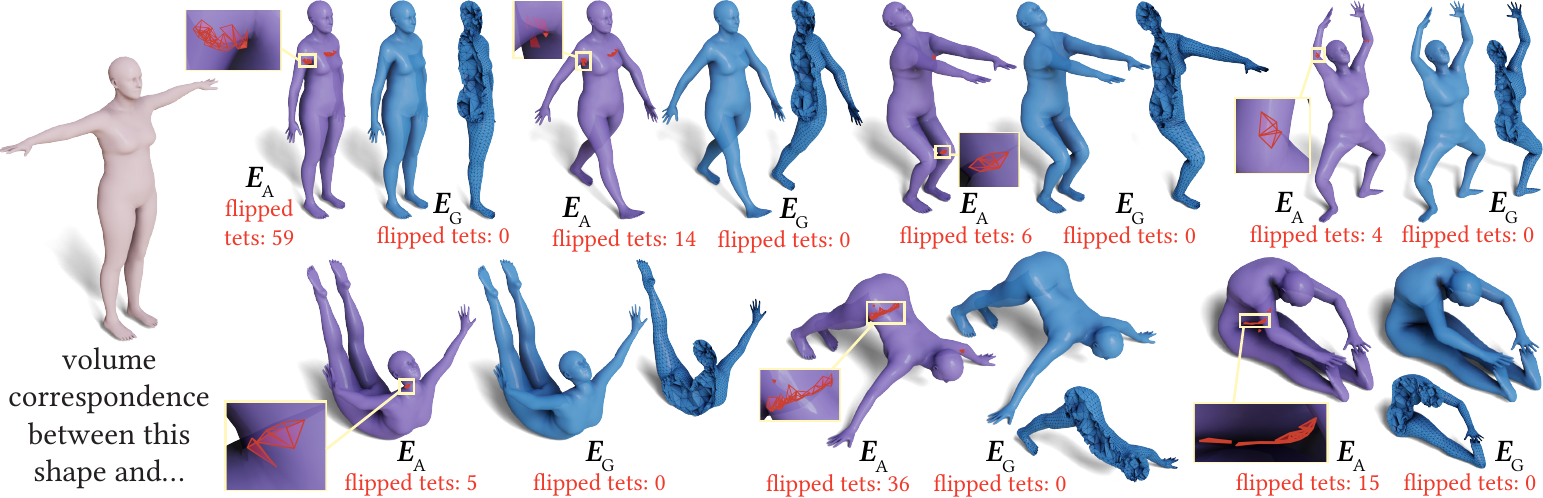}
    \caption{Computing a correspondence between the tetrahedral mesh of
    a human \figloc{far left}, and the interior of a variety of surfaces
    whose boundaries correspond to the original human.
    Correspondences computed with our method using \(\ESG\) exhibit no flipped
    tetrahedra \figloc{top},
    while correspondences computed using \(\EARAP\) can contain
    flipped tetrahedra \figloc{bottom, highlighted in red}.
    \label{fig:volumecorrespondence}}
\end{figure}

Distortion energies can be used for deformation by constraining
a part of the target mesh \(\bW\).
We can constrain isolated vertices, as well as entire
regions of the mesh, and then run our optimization method,
initialized with the identity map.
The few initially distorted (and potentially flipped)
elements do not prevent our method from finding a solution.

In Figures \ref{fig:teaser}, \ref{fig:deformationtwod}, \ref{fig:deformationthreed}
we deform a variety of surfaces and volumes by constraining vertices of
the target mesh, and compare the results of the deformation using our method using \(\ESD\)
with the results of minimizing the ARAP energy \(\EARAP\).
Our method succeeds in producing natural-looking deformations
while avoiding inverted elements.
\figref{interactivedeformation} shows our method applied to interactive deformation:
we created a user interface that allows fixing vertices of the
target mesh and dragging them to the desired location.
The displayed mesh is updated interactively during each iteration of our method.
This results in a smooth interactive experience,
as each iteration of our method is cheap to compute.

\subsection{Volume Correspondence}
\label{sec:volumecorrespondence}

Our method can be used to compute a correspondence between
the interior volumes of two given surfaces.
To do this, we minimize the distortion of a map between the interior
volume of the first surface (which has been tet-meshed
\cite{Hu2018,Hang2015}), and the same volume with
its boundary fixed to the second surface.
The optimization is initialized with a minimizer of
\(\ETutte\), which can contain flipped tetrahedra,
however our method is able to arrive at a flip-free distortion-minimizing
volume correspondence map nevertheless.

\figref{volumecorrespondence} shows our method applied to
compute correspondences between the interior of a human and a variety of
other surfaces that show the same human, but in a different position.
\figref{teaser} shows a volume correspondence between two different
configurations of a teddy bear.
In both cases, our method produces flip-free distortion-minimizing
volume correspondences.
Surface correspondences for applying our method
can be computed, e.g., using \cite{Ezuz2019}.

\begin{figure}
    \includegraphics[width=\textwidth]{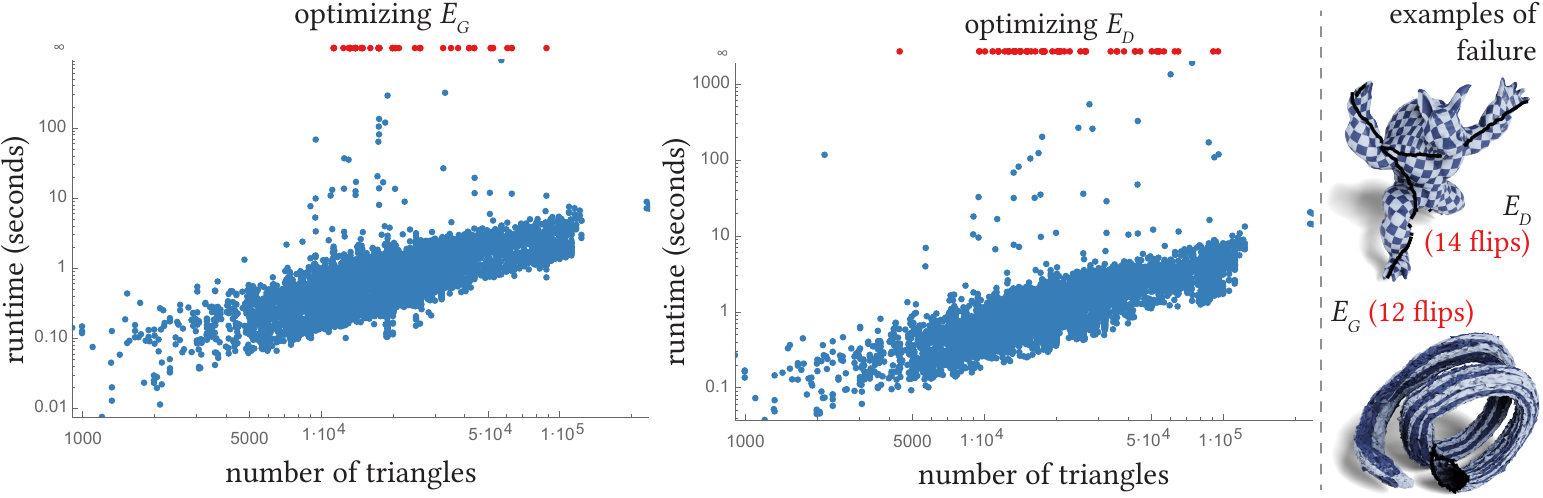}
    \caption{Computing UV parametrizations for all meshes in the D1, D2 and D3
    datasets \cite{Liu2018} (modified to exclude degenerate meshes), consisting
    of 15525 meshes.
    The method was initialized with \(\ETutte\), run with a maximal iteration
    count of 100000, and with termination tolerances
    \(\varepsilon_\textrm{abs} = 5\cdot10^{-5}, \varepsilon_\textrm{rel} = 5\cdot10^{-4}\).
    Successful meshes are shown in blue, failed meshes
    (0.38\% for \(\ESG\), 0.53\% for \(\ESD\)) in red.
    \label{fig:dataset}}
\end{figure}

\section{Limitations}
\label{sec:limitations}

There are a few scenarios in which our method can fail.
We can not initialize our method with arbitrary input
(see \figref{limitations}, \emph{left}),
which can cause \(\|\nabla f(\bP_i^{(k)})\|\) to grow very large.
In this case, our method will not converge to a flip-free optimum;
this can prevent us from initializing with \(\EConformal\) in the presence of
too many flips (this is the case, e.g., for the tree example in
\tableref{comparisonwithprevious}).
In \figref{dataset}, a few inputs fail to yield a flip-free parametrization.

While minimizers of \(\ESD\) are guaranteed to be flip-free,
this does not mean that the maps will be bijective.
As with previous methods that employ \(\ESD\), one often obtains bijective maps
in practice, although this is not guaranteed.
A \(k\)-cover of a triangle mesh can be completely flip-free,
while also failing to be locally injective (see \figref{limitations},
\emph{center}).
This applies to all methods which optimize flip-free energies that are
not specifically bijective, such as \(\ESD\).
Related works propose solutions to this problem \cite{Smith2015,Garanzha2021}.

Our method can fail when constraints on \(\bW\) are imposed that
constitute a very large deformation from the initial target mesh \(\bW^{(0)}\).
An example of this can be seen in \figref{limitations}, \emph{right}.
The deformation application has the additional limitation that the termination
tolerances need to be set lower than for other applications to achieve good
results (although this can be somewhat remedied
by initializing with minimizers of \(\EARAP\).

Theorem \ref{main-theorem} guarantees convergence, assuming exact arithmetic.
The implementation on the computer uses floating-point arithmetic, which is not
exact, which can result in elements that are flipped for numerical reasons.
If such numerical issues occur while the optimization is still making
progress, the method might recover, like it does for maps with
flipped elements such as in \figref{unflippingtriangles}.
If this happens when the method can no longer make useful progress, the method
will fail to terminate
(this is a known issue with parametrization methods \cite{Shen2019}).

\section{Conclusion}
\label{sec:conclusion}

In this paper, we have proposed a new splitting scheme
for the optimization of flip-free distortion energies, discussed the convergence behavior of the resulting ADMM algorithm
under certain conditions, and demonstrated its utility in a variety of applications.

There are opportunities for future work in many directions.
On the application side, our method could be used for other
applications where flip-free distortion-minimizing mappings
are required, such as in elasticity simulation combined with contact mechanics,
for example in the context of an efficient solver such as
projective dynamics \cite{Bouaziz2014};
or to deform shapes with suitability for fabrication in mind \cite{Bouaziz2012}.
On the implementation side, our method could be considerably sped up by
implementing some of our parallelized instructions
in the \(\bP\) and \(\bU\) optimization step in a way that exploits
simultaneous execution capabilities of modern CPUs such as SSE or AVX,
similar related approaches \cite{McAdams2011} for the optimization of \(\EARAP\)
\cite[\texttt{polar\textunderscore{}svd3x3.h}]{libigl}.
On the algorithm side, further approaches to improve the performance
of the ADMM can be employed.
There is a lot of recent work on the topic, and some of it might
be able to speed up our splitting method.

\begin{figure}
    \includegraphics[width=\linewidth]{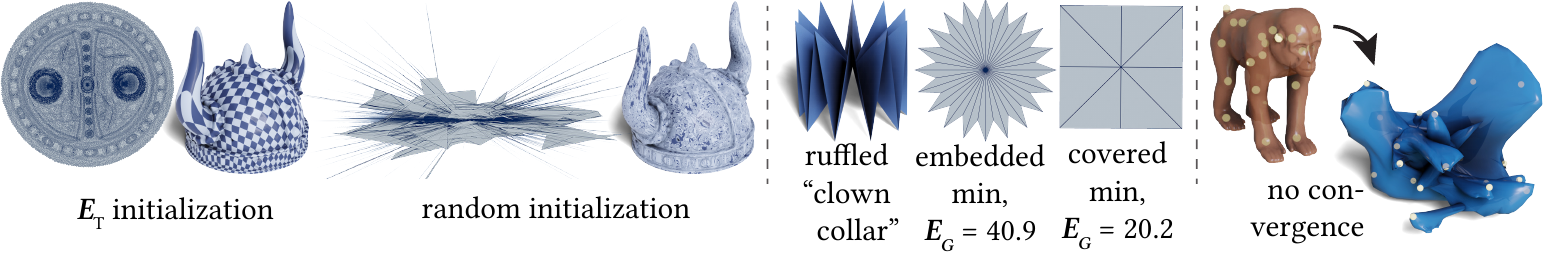}
    \caption{If our method is initialized with a bad initial target
    mesh \(\bW\), it will not converge like it would with a good initial
    target mesh (e.g., the minimizer of \(\ETutte\)) \figloc{left}.
    While our method will always produce a flip-free map,
    not all flip-free maps are bijective: the \(k\)-cover of this ruffled
    high-valence vertex has lower distortion than the bijective map \figloc{center}.
    If the target mesh \(\bW\) is constrained to a deformation that
    is very far from the initial mesh, our method can fail
    to converge \figloc{right}.
    \label{fig:limitations}}
\end{figure}

\begin{table}
\centering
    \includegraphics[width=\textwidth]{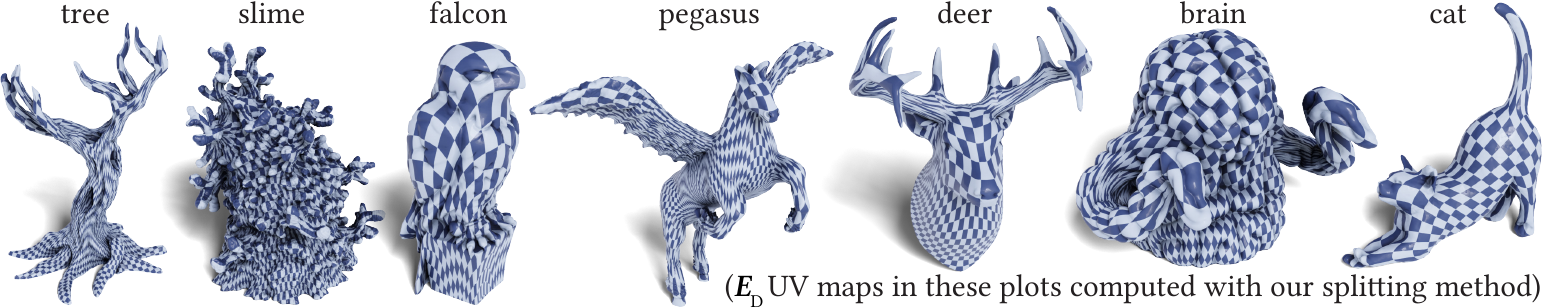}
    {\scriptsize\centering
\begin{tabular}{lrr|rr|rr|rr}
    \hline
    \vspace{-8pt}\quad& \multicolumn{1}{l}{\(\quad m\)} &
    \multicolumn{1}{l|}{\(\;\frac{\textrm{min.\ area}}{\textrm{max.\ area}}\)} &
    \multicolumn{1}{l}{SLIM} &
    \multicolumn{1}{l|}{ours} &
    \multicolumn{1}{l}{AKVF} & 
    \multicolumn{1}{l|}{ours} &
    \multicolumn{1}{l}{PP} & 
    \multicolumn{1}{l}{ours} \\
    & \multicolumn{1}{l}{} &
    \multicolumn{1}{l|}{} &
    \multicolumn{1}{l}{} &
    \multirow{2}{40pt}{\tiny (match \\ SLIM \(\ESD\))} &
    \multicolumn{1}{l}{} & 
    \multirow{2}{40pt}{\tiny (match \\ AKVF \(\ESD\))} &
    \multicolumn{1}{l}{} & 
    \multirow{2}{30pt}{\tiny (match \\ PP \(\ESD\))} \\
    & & & & & & & & \\
    \hline
        \rowcolor{clblue}
        camel & \(3.58\)k & \(0.00191\) &
        \(0.246s\) & \(0.134s^\dagger\) &
        \(0.947s\) & \(3.16s^\dagger\) &
        \(0.115s\) & \(3.20s^\dagger\) \\
        triceratops & \(5660\) & \(3.27 \cdot 10^{-7}\) &
        \(0.324s\) & \(0.233s^\dagger\) &
        \(0.968s\) & \(2.25s^\dagger\) &
        \(-^\ddag\) & \(0.403s^\dagger\) \\
        \rowcolor{clblue}
        cow & \(5.80\)k &\(0.00132\) &
        \(0.346s\) & \(0.212s^\dagger\) &
        \(1.53s\) & \(1.05s^\dagger\) &
        \(0.207s\) & \(1.08s^\dagger\) \\
        tooth & \(9129\) & \(3.27 \cdot 10^{-7}\) &
        \(0.426s\) & \(0.307s^\ast\) &
        \(1.36s\) & \(0.367s^\ast\) &
        \(0.252s\) & \(0.395^\ast\) \\
        \rowcolor{clblue}
        hand & \(9.36\)k & \(0.00174\) &
        \(-\) & \(1.95^\dagger\) &
        \(1.31s\) & \(1.51s^\dagger\) &
        \(0.516s\) & \(1.50s^\dagger\) \\
        deer & \(10.9\)k & \(0.000179\) &
        \(-\) & \(6.10^\ast\) &
        \(-\) & \(6.76^\ast\) &
        \(-^\ddag\) & \(240s^\dagger\) \\
        \rowcolor{clblue}
        horse & \(39.7\)k & \(0.000429\) &
        \(1.46s\) & \(1.23s^\dagger\) &
        \(7.64s\) & \(10.8s^\dagger\) &
        \(1.04s\) & \(10.9s^\dagger\) \\
        bread & \(49.9\)k & \(0.000760\) &
        \(2.07s\) & \(2.81s^\dagger\) &
        \(14.4s\) & \(2.84s^\dagger\) &
        \(1.10s\) & \(2.81s^\dagger\) \\
        \rowcolor{clblue}
        falcon & \(51.5\)k & \(0.000180\) &
        \(-\) & \(7.71s^\ast\) &
        \(-\) & \(7.48s^\ast\) &
        \(1.22s\) & \(9.05s^\ast\) \\
        cat & \(90.0\)k & \(0.00377\) &
        \(3.54s\) & \(19.4s^\ast\) &
        \(10.5s\) & \(15.7s^\ast\) &
        \(3.62s\) & \(25.4s^\ast\) \\
        \rowcolor{clblue}
        car & \(97.6\)k & \(5.25 \cdot 10^{-6}\) &
        \(3.83s\) & \(11.0s^\dagger\) &
        \(15.5s\) & \(11.2s^\dagger\) &
        \(1.99s\) & \(11.1s^\dagger\) \\
        brain & \(152\)k & \(0.00107\)
        & \(-\) & \(308s^\ast\) &
        \(-\) & \(308s^\ast\) &
        \(15.7s\) & \(16.8s^\ast\) \\
        \rowcolor{clblue}
        strawberry & \(313\)k & \(1.81 \cdot 10^{-6}\) &
        \(14.8s\) & \(55.3s^\ast\) &
        \(45.1s\) & \(59.8s^\ast\) &
        \(9.57s\) & \(60.5s^\ast\) \\
        slime & \(567\)k & \(0.000980\) &
        \(-\) & \(1140s^\ast\) &
        \(112s\) & \(38.4s^\ast\) &
        \(53.7s\) & \(42.9s^\ast\) \\
        \rowcolor{clblue}
        tree & \(630\)k & \(5.84 \cdot 10^{-5}\) &
        \(-\) & \(2080s^\ast\) &
        \(--\) & \(2090s^\ast\) &
        \(187s\) & \(76.0s^\ast\) \\
        pegasus & \(2390\)k & \(1.82 \cdot 10^{-10}\) &
        \(-\) & \(859s^\ast\) &
        \(-\) & \(866s^\ast\) &
        \(-\) & \(862s^\ast\) \\
    \hline
    \multicolumn{9}{l}{\vspace{7pt}
    \multirow{2}{400pt}{\(-\) a previous work was unable to find a distortion-minimizing flip-free mapping (it errored, or flips were present in the result)} }\\
    \multicolumn{9}{l}{\vspace{-5pt}
    \(--\) the algorithm terminated, but with a very high energy, which is counted
    as unsuccessful} \\
    \multicolumn{9}{l}{\vspace{-5pt}
    \(\ast\) our method initialized with a minimizer of \(\ETutte\)} \\
    \multicolumn{9}{l}{\vspace{-5pt}
    \(\dagger\) our method initialized with a minimizer of \(\EConformal\)} \\
    \multicolumn{9}{l}{\(\;\ddag\) previous method will sometimes work,
    and sometimes fail; thus reported as fail here} \\
    \hline
\end{tabular}
}
\quad\\ \quad\\     \caption{Comparing the runtime of UV maps generated
    with our method the the previous methods of SLIM \cite{Rabinovich2017},
    AKVF \cite{Claici2017}, and PP \cite{Liu2018}.
Since each of these related works uses their own termination condition, they
    do not all arrive at a parametrization with the same \(\ESD\).
    To compare against each previous method as intended by its authors, we run
    the publicly available implementation of the method until it satisfies its
    own termination condition (or is aborted after 150 minutes).
    We then measure \(\ESD\) of the previous method's UV map (after rescaling it
    to match the total area of the original mesh), and run our own algorithm to
    produce a flip-free map matching the previous method's \(\ESD\) up to a
    tolerance of \(10^{-6}\).
    Thus the runtimes in this table are plotted in pairs:
    a previous method, as well as our algorithm set to produce a map with the
    same distortion.
We initialize our method with minimizers of \(\ETutte\) and \(\EConformal\),
    and report the best time.
    If the previous method did not terminate successfully, our method is run
    with default termination conditions (which, in general, are set to produce
    lower-energy result that previous methods' termination conditions).
    Values are rounded to three significant digits.
    \label{tab:comparisonwithprevious}}
\end{table}

\section{Acknowledgements}
\label{sec:acknowledgements}

We thank Alp Yurtsever and Suvrit Sra for discussing ADMM proofs of convergence
with us.
Credit for meshes used goes to \cite{schlossbauer2020,Jacobson2013,MustangDave2015,Toawi2020,Catiav5ftw2015,blender,Holinaty2020,Stanford2020,Holinaty2020b,Nefertiti,paBlury2017,hugoelec2013,Baran2007,SMPL2021,Pauly2003,BlueRobot2017,GlossyRedLips2019,Crane2015,Sorkine2004,Liu2018,M3DM2020,schlossbauer2021,colinfizgig2013,AJade2019,hugoelec2013b,billyd2016,4MULE82016,YahooJAPAN2013}.

This work is supported by the Swiss National Science Foundation's Early
Postdoc.Mobility fellowship.
The MIT Geometric Data Processing group acknowledges the generous support of Army Research Office grants W911NF2010168 and W911NF2110293, of Air Force Office of Scientific Research award FA9550-19-1-031, of National Science Foundation grants IIS-1838071 and CHS-1955697, from the CSAIL Systems that Learn program, from the MIT--IBM Watson AI Laboratory, from the Toyota--CSAIL Joint Research Center, from a gift from Adobe Systems, from an MIT.nano Immersion Lab/NCSOFT Gaming Program seed grant, and from the Skoltech--MIT Next Generation Program.
 
\begin{appendices}
	\section*{Appendix}

\section{Distortion Energies}
\label{app:deformationenergies}

This appendix discusses distortion energies appearing in this article that have
been featured extensively in previous work.

\subsection{The Symmetric Dirichlet Energy}
\label{app:symmetricdirichlet}

\begin{definition}[Symmetric Dirichlet energy \cite{Smith2015}]
	The defining function \(\fSD : \Rnm{d}{d} \rightarrow \Ro\) of the symmetric
    Dirichlet energy is
	\begin{equation}\label{eq:symmetricdirichletdefiningfunction}
		\fSD(X) \coloneqq \frac12 \left(
		\norm{X}^2 + \norm{X^{-1}}^2 \right)
		\;\textrm{.}
	\end{equation}
    The symmetric Dirichlet energy \(\ESD\) is defined as the flip-free generic
    distortion energy \equref{genericdeformationenergy} with \(f=\fSD\).
\end{definition}

\(\ESD\) is finite and continuous on all maps
whose Jacobian has positive determinant.
By the definition of \equref{genericdeformationenergy}, a negative
determinant leads to infinite \(\ESD\) (due to \(\chi_+\)),
and since \(\fSD\) has a singularity for matrices with zero determinant,
there is a barrier preventing determinants from approaching zero.
Hence, minimizers of \(\ESD\) are flip-free (unless overconstrained).
Furthermore, \(\ESD\) is invariant to rotations: 
for a rotation matrix \(U\), \(\fSD(UX) = \fSD(X)\).

The gradient of the defining function is given by
\begin{equation}\label{eq:symmetricdirichletdefininggrad}
	\nabla\fSD(X) = X - X^{-\transp}X^{-1}X^{-\transp}
	\;\textrm{.}	
\end{equation}
Unlike \(\fSD\) itself, its gradient
is not symmetric with respect to inverting its argument.
The gradient also features a stronger singularity (\(\sim 1/x^3\)) than the
defining function (\(\sim 1/x\)).

Our optimization method reproduces the output of previous methods
\cite{Rabinovich2017,Claici2017,Liu2018} when optimizing \(\ESD\).
\figref{symmetricdirichletinoursandprevious} shows our method as well as
previous methods used to compute UV maps by minimizing \(\ESD\): the results
visually match.
Unlike these previous methods, we will be able to prove under
which conditions our approach converges.

\subsection{Non-Injective Distortion Energies}
\label{app:otherdeformationenergies}

Beyond flip-free energies like the symmetric Dirichlet energy, many important
energies allow elements to invert.
Although their optima might not be desirable as final results, they have a
large advantage over flip-free energies: they are usually much easier to
optimize, thanks to linearity or a lack of singularities.
Since our algorithm is resilient to initial iterates that contain flips, we can
use optima of these simpler energies as initializers (with exceptions; see
\secref{limitations}).

\subsubsection{Tutte's Energy}
\label{app:tutte}

\begin{definition}[Tutte's energy \cite{Tutte1963}]
    Tutte's energy for the target mesh \(\bW\) is given by
    \begin{equation*}
        \ETutte(\bW) \coloneqq \sum_{\textrm{edges }(i,j)}
        \frac{1}{\norm{\bV_i - \bV_j}} \norm{\bW_i - \bW_j}^2
        \;\textrm{,}
    \end{equation*}
\end{definition}
While Tutte did not define his energy exactly as written above, that definition
can be found in recent references \cite{Jacobson2020}.

Minimizers of \(\ETutte\) for triangulated surfaces are  flip-free if all
boundary vertices are constrained to a convex shape \cite{Tutte1963}.
\(\ETutte\) is a quadratic energy which can be efficiently
minimized by solving a linear system without initialization.
Hence, minimizing \(\ETutte\) is a popular strategy for generating
initial flip-free parametrizations to launch additional line-search-based
optimization steps that further reduce distortion
\cite{Smith2015,Claici2017,Rabinovich2017,Liu2018}.

Minimizers of \(\ETutte\) for \emph{tetrahedralized volumes}, however,  are not
in general flip-free, even if all boundary vertices are constrained to a convex
shape.
This is an obstacle for optimization methods that need to start with a flip-free
map.
Our method
does not automatically fail if the initial map contains flipped elements, and
can thus use \(\ETutte\) even for volumes.

\subsubsection{Conformal Energy}
\label{app:conformal}

As a contrast to Tutte's energy, one can construct quadratic energies built on
estimates of the derivative of a surface/volume map, sensitive to conformal
(angle-based) geometry.
One popular choice is the conformal energy.
\begin{definition}[Conformal energy \cite{Mullen2008}]
    The conformal energy for the target mesh \(\bW\) is
\begin{equation*}
        \EConformal(\bW) \coloneqq
        \frac{1}{2} \sum_{i=1}^m w_i \norm{(G \bW)_i}^2
        - A(\bW)
        \;\textrm{,}
    \end{equation*}
    where \(A(\bW)\) is the area of the target mesh \(\bW\).
\end{definition}
\noindent Similarly to \(\ETutte\), \(\EConformal\) is quadratic in \(\bW\).
Unlike \(\ETutte\), however, minimizers of \(\EConformal\) are not guaranteed to
be flip-free for surfaces.

Several papers propose ways to discretize and optimize \(\EConformal\) in
practice (see \secref{relatedsurfaceparametrization}).
In this work, we employ the method \cite{Sawhney2017}, which efficiently
minimizes \(\EConformal\) with free boundary and minimal area
distortion.

\subsubsection{As-Rigid-As-Possible Energy}
\label{app:arapse}

The linear energies above do not directly measure the deviation of a map from
being \emph{rigid}
The as-rigid-as-possible energy is specifically designed to be sensitive to
non-rigidity.

\begin{definition}[As-Rigid-As-Possible (ARAP) Energy
\cite{Sorkine2007,Liu2008}]
    The ARAP energy's defining function is
	\begin{equation*}
        \fARAP(X) \coloneqq \frac12 \norm{X - \rot X}^2
		\;\textrm{,}
	\end{equation*}
	where \(\rot X\) isolates the rotational part of a matrix \(X\)
	by solving a Procrustes problem \cite{Gower2004},
	\begin{equation*}
		\rot(X)	\coloneqq \argmin_{R \in \SOn{d}} \norm{R - X}^2\;\textrm{.}
	\end{equation*}
    The ARAP energy \(\EARAP\) is defined as the generic
    distortion energy with flips \equref{genericdeformationenergywithflips} and
    \(f=\fARAP\).
\end{definition}

In this article we employ the local-global solver with per-element
discretization \cite{Liu2008} as implemented by libigl \cite{libigl}, which
we denote by \(\EARAP\).
We run the optimization until the relative error between two subsequent iterates
is less than \(10^{-6}\), but not more than 150 minutes.

\(\EARAP\) is a popular distortion energy:
it produces results that are
reminiscent of elasticity, while being cheap to optimize.
Its minimizers are, however, not always flip-free.

\section{Additional Calculations for the Convergence Proof}
\label{app:appendixproof}

This appendix contains proofs for the convergence analysis in
\secref{convergence}.

\begin{proof}[Proof of Lemma \ref{lm:lipc}]
Let $\bP_i = \mathbf{V}\mathbf{\Sigma}{\mathbf{V}}^\transp$ be the eigenvalue
		decomposition of \(\bP_i\).
		We consider \(\ESG\) first (\(f = \fSG\)). 
		Recall that $f(\bP_i) = \frac{1}{2}\|\bP_i\|^2 - \log \det \bP_i$.
		Hence,
			\[
			\|\nabla f(\bP_i)\|^2= \|\bP_i-{\bP_i}^{-1}\|^2 = \|\mathbf{\Sigma}-{\mathbf{\Sigma}}^{-1}\|^2 = \sum_{i=1}^d (\lambda_i-\lambda_i^{-1})^2,
			\]
			where $\mathbf{\Sigma} = \text{Diag}([\lambda_1,\cdots,\lambda_d])$ and $\lambda_1\leq\lambda_2\leq \cdots\leq \lambda_d$.
			By \condref{gradf}, we have 
			\begin{equation}\label{eq:lmbound}
			\lambda_1(\bP_i) \leq C_i^L \triangleq -\frac{B_i}{2} +\frac{\sqrt{4+B_i^2}}{2}.
			\end{equation}
		Next, we aim to use the term $\|\bP_i^{(k+1)}-\bP_i^{(k)}\|$ to bound $\| \nabla f(\bP_i^{(k+1)}) -\nabla f(\bP_i^{(k)})\|$, using the conditions $\lambda_1(\bP_i^{(k+1)}) \leq C_i^L$ and $\lambda_1(\bP_i^{(k)}) \leq C_i^L$, i.e., \eqref{eq:lmbound}.
\begin{align*}
		\|\nabla f(\bP_i^{(k+1)})-\nabla f(\bP_i^{(k)})\| & = \|(\bP_i^{(k+1)}-{\bP_i^{(k+1)}}^{-1}) - (\bP_i^{(k)}-{\bP_i^{(k)}}^{-1})\| \\
		& \leq \|\bP_i^{(k+1)}-\bP_i^{(k)}\|  + \|{\bP_i^{(k+1)}}^{-1}-{\bP_i^{(k)}}^{-1}\|.  
		\end{align*}
We proceed to bound $\|{\bP_i^{(k+1)}}^{-1}-{\bP_i^{(k)}}^{-1}\|$,
		\begin{align*}
	   \|{\bP_i^{(k+1)}}^{-1}-{\bP_i^{(k)}}^{-1}\| & \leq \sqrt{d} \|{\bP_i^{(k+1)}}^{-1}-{\bP_i^{(k)}}^{-1}\|_2 
		= \sqrt{d}\|{\bP_i^{(k)}}^{-1}(\bP_i^{(k)}-\bP_i^{(k+1)}){\bP_i^{(k+1)}}^{-1}\|_2  \\ 
		& \leq \sqrt{d}  \|{\bP_i^{(k)}}^{-1}\|_2\|\bP_i^{(k)}-\bP_i^{(k+1)}\|_2 \|{\bP_i^{(k+1)}}^{-1}\|_2 
		\leq \frac{\sqrt{d}}{{C_i^L}^{2}} \|\bP_i^{(k)}-\bP_i^{(k+1)}\|_2,
		\end{align*}
		where the last inequality follows from spectral norm of inverse coming from first eigenvalue.
		
		Combining the above two inequalities, we find that
		\[ \| \nabla f(\bP_i^{(k+1)}) -\nabla f(\bP_i^{(k)})\| \leq \left( 1+ \frac{\sqrt{d}}{{C_i^L}^{2}} \right) \|\bP_i^{(k+1)}-\bP_i^{(k)}\|.\]

		Following a similar argument, we can also derive the explicit local Lipschitz constant for
		\(\ESD\) (\(f = \fSD\)).
		Recall that $f(\bP_i) = \frac{1}{2}\left(\|\bP_i\|^2 + \|\bP_i^{-1}\|^2\right)$.  Hence,
		\[
		\|\nabla f(\bP_i)\|^2= \|\bP_i-{\bP_i}^{-3}\|^2 = \|\mathbf{\Sigma}-{\mathbf{\Sigma}}^{-3}\|^2 = \sum_{i=1}^d (\lambda_i-\lambda_i^{-3})^2.
		\]
		By \condref{gradf} we have
		$\lambda_1(\bP_i) \leq C_i^{LG}$, i.e., \eqref{eq:lmbound},
		where the $C_i^{LG}$ are the positive roots of the quartic equation
		$x^4+B_ix^3-1=0$.
		Moreover, $C_i^{L} \leq C_i^{LG} \leq 1$ (see \figref{clclgbound}).
		\begin{figure}
			\centering
			\includegraphics[width=0.5\linewidth]{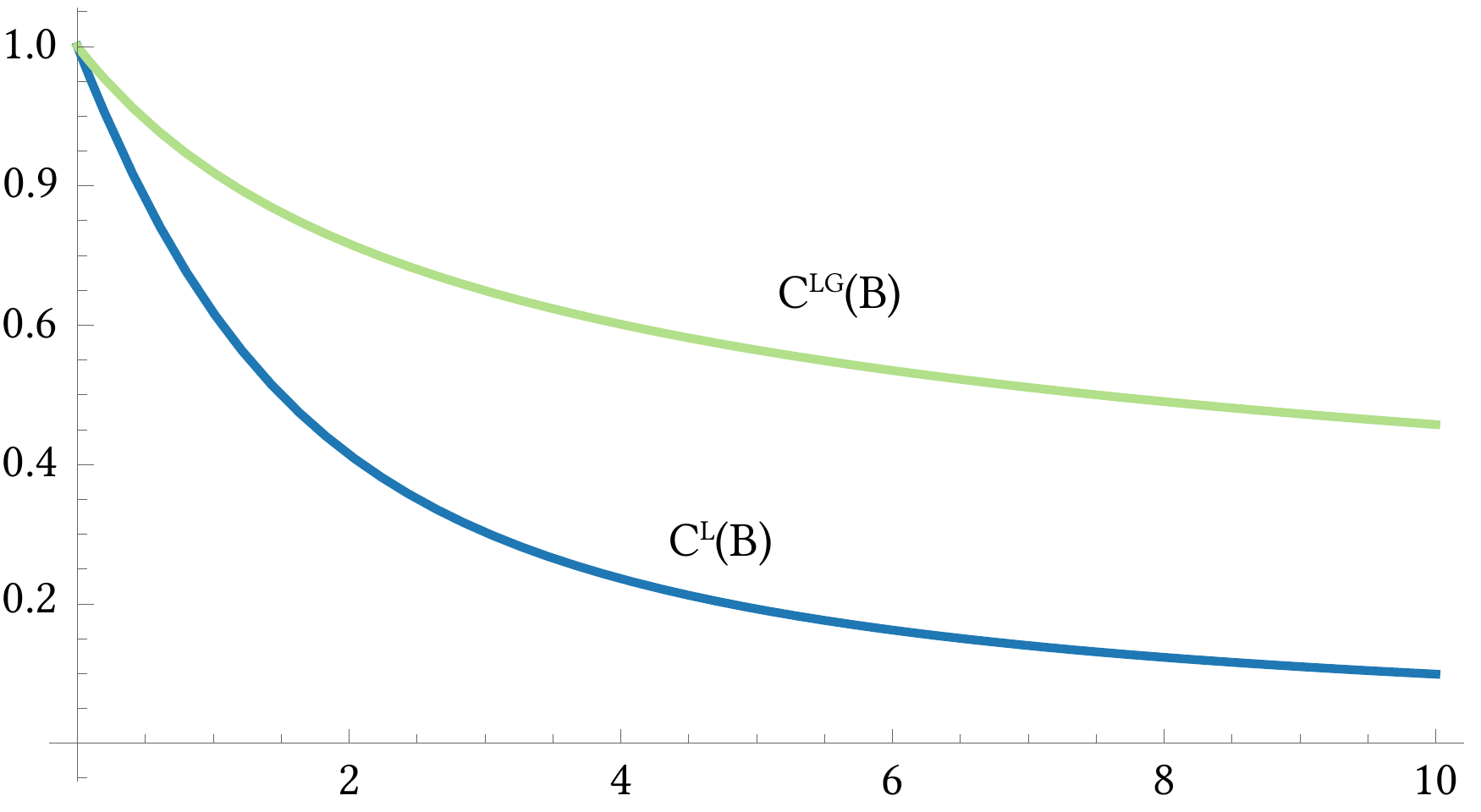}
			\caption{A plot of the bounds \(C_i^{L}\) and \(C_i^{LG}\) with respect
				to the maximal gradient norm \(B_i\).
				One can see that \(C_i^{L} \leq C_i^{LG} \leq 1\).
				\label{fig:clclgbound}}
		\end{figure}
\begin{equation*}
		\begin{aligned}
		\|\nabla f(\bP_i^{(k+1)})-\nabla f(\bP_i^{(k)})\| & = \|(\bP_i^{(k+1)}-{\bP_i^{(k+1)}}^{-3}) - (\bP_i^{(k)}-{\bP_i^{(k)}}^{-3})\| \\
		& \leq \|\bP_i^{(k+1)}-\bP_i^{(k)}\|  + \|{\bP_i^{(k+1)}}^{-3}-{\bP_i^{(k)}}^{-3}\|.  
		\end{aligned}
		\end{equation*}
		Similarly to our proof for the symmetric gradient energy, we next bound $ \|{\bP_i^{(k+1)}}^{-3}-{\bP_i^{(k)}}^{-3}\|$:
		\begin{equation*}
		\begin{aligned}
		\|{\bP_i^{(k+1)}}^{-3}-{\bP_i^{(k)}}^{-3}\|& \leq \sqrt{d}  \|{\bP_i^{(k+1)}}^{-3}-{\bP_i^{(k)}}^{-3}\|_2 \\ 
		& \leq \sqrt{d} \left \|{\bP_i^{(k+1)}}^{-3}\left({\bP_i^{(k+1)}}^3-{\bP_i^{(k)}}^3\right){\bP_i^{(k)}}^{-3}\right\|_2.
		\end{aligned}
		\end{equation*}
		We observe that 
		\begin{equation*}
		\begin{aligned}
		&{\bP_i^{(k+1)}}^3-{\bP_i^{(k)}}^3 \\ &={\bP_i^{(k+1)}}^{2}(\bP_i^{(k+1)}-\bP_i^{(k)})+\bP_i^{(k+1)}(\bP_i^{(k+1)}-\bP_i^{(k)})\bP_i^{(k)}+(\bP_i^{(k+1)}-\bP_i^{(k)}){\bP_i^{(k)}}^2.
		\end{aligned}
		\end{equation*}
		Based on the above equality, we find that
		\begin{align*}
		& \left\|{\bP_i^{(k+1)}}^{-3}\left({\bP_i^{(k+1)}}^3-{\bP_i^{(k)}}^3\right){\bP_i^{(k)}}^{-3}\right\|_2 \\
		 &\quad\leq  \left\|{\bP_i^{(k+1)}}^{-1}\left(\bP_i^{(k+1)}-\bP_i^{(k)}\right){\bP_i^{(k)}}^{-3}\right\|_2
		 + \left\|{\bP_i^{(k+1)}}^{-2}\left(\bP_i^{(k+1)}-\bP_i^{(k)}\right){\bP_i^{(k)}}^{-2}\right\|_2 \\
		 &\quad\quad+ \left\|{\bP_i^{(k+1)}}^{-3}\left(\bP_i^{(k+1)}-\bP_i^{(k)}\right){\bP_i^{(k)}}^{-1}\right\|_2 \\
		 &\quad \leq \frac{3}{{C_i^{LG}}^4}\left\|\bP_i^{(k+1)}-\bP_i^{(k)}\right\|. 
		\end{align*}
		Combining all the statements derived above, we conclude 
		\[
		\|\nabla f(\bP_i^{(k+1)})-\nabla f(\bP_i^{(k)})\| \leq \left( 1 + \frac{3\sqrt{d}}{{C_i^{LG}}^4}\right) \|\bP_i^{(k+1)}-\bP_i^{(k)}\|,
		\]
		which proves the lemma.
\end{proof}

\begin{proof}[Proof of Proposition \ref{prop:admm}]
We begin by deriving a sufficient decrease property for the augmented Lagrangian
function.
The core strategy here is to use the primal blocks $(\bW,\bU,\bP)$ to bound the
dual variable $\bLambda$. 	
\begin{align*}	
\Phi^{k+1}-\Phi^k = & \underbrace{\Phi(\bW^{(k+1)}, \bU^{(k+1)}, \bP^{(k+1)}, \bLambda^{(k+1)})- \Phi(\bW^{(k+1)}, \bU^{(k+1)}, \bP^{(k+1)}, \bLambda^{(k)})}_{(a)} +\\
&\underbrace{\Phi(\bW^{(k+1)}, \bU^{(k+1)}, \bP^{(k+1)}, \bLambda^{(k)})- \Phi(\bW^{(k+1)}, \bU^{(k+1)}, \bP^{(k)}, \bLambda^{(k)})}_{(b)}+ \\
&\underbrace{\Phi(\bW^{(k+1)}, \bU^{(k+1)}, \bP^{(k)}, \bLambda^{(k)})- \Phi(\bW^{(k+1)}, \bU^{(k)}, \bP^{(k)}, \bLambda^{(k)})}_{(c)}+\\ 
&\underbrace{\Phi(\bW^{(k+1)}, \bU^{(k)}, \bP^{(k)}, \bLambda^{(k)})- \Phi(\bW^{(k)}, \bU^{(k)}, \bP^{(k)}, \bLambda^{(k)})}_{(d)}.
\end{align*}
Focusing on the dual update,
\begin{align*}
(a) = & \sum_{i=1}^m \frac{\mu_i}{2} \left(\norm{(G\bW^{(k+1)})_i - \bU_i^{(k+1)} \bP_i^{(k+1)} + \bLambda_i^{(k+1)}}^2
- \norm{\bLambda_i^{(k+1)}}^2 \right) \\
&\quad - \sum_{i=1}^m \frac{\mu_i}{2} \left(\norm{(G\bW^{(k+1)})_i - \bU_i^{(k+1)} \bP_i^{(k+1)} + \bLambda_i^{(k)}}^2
- \norm{\bLambda_i^{(k)}}^2 \right) \!\! \\
= &\sum_{i=1}^m \frac{\mu_i}{2}  \left(\norm{\bLambda_i^{(k+1)} - \bLambda_i^{(k)} + \bLambda_i^{(k+1)}}^2
- \norm{\bLambda_i^{(k+1)}}^2 \right) \\
&\quad- \sum_{i=1}^m \frac{\mu_i}{2} \left(\norm{\bLambda_i^{(k+1)} - \bLambda_i^{(k)} + \bLambda_i^{(k)}}^2
- \norm{\bLambda_i^{(k)}}^2 \right) \!\!\\
= &\sum_{i=1}^m \mu_i \|\bLambda_i^{(k+1)}-\bLambda_i^{(k)}\|^2. 
\end{align*}
To use the primal blocks to bound the dual update $(a)$, we first write down the optimality condition with respect to $\bP^{(k+1)}$, 
\begin{align*}
&  \bP^{(k+1)} =  \mathop{\argmin_{\bP \in (\SPDn{d})^m}} 
\Phi\left(\bW^{(k+1)}, \bU^{(k+1)}, \bP, \bLambda^{(k)}\right)\\
\Rightarrow \, & 0 = w_i \nabla f(\bP_i^{(k+1)}) + \mu_i \operatorname{symm}\left({\bU_i^{(k+1)}}^\transp\left(\bU_i^{(k+1)}\bP_i^{(k+1)}-(G\bW^{(k+1)})_i-\bLambda_i^{(k)}\right)\right), \;\; \forall i \\
\Rightarrow \, & 0 = w_i \nabla f(\bP_i^{(k+1)}) - \mu_i  \operatorname{symm}\left({\bU_i^{(k+1)}}^\transp\bLambda_i^{(k+1)}\right), \;\; \forall i.
\end{align*}
Then, we have 
\begin{align*}
& \frac{w_i}{\mu_i}\bU_i^{(k+1)} \nabla f(\bP_i^{(k+1)})  = \frac{1}{2}\left(\bLambda_i^{(k+1)} +\bU_i^{(k+1)} {\bLambda_i^{(k+1)}}^\transp \bU_i^{(k+1)}\right) ,\forall i.
\end{align*}
Thus, we can use $(\bU,\bP)$ to bound $(a)$. By \condref{lambdabound},
	\begin{align*}
	& \left\|\bLambda_i^{(k+1)}-\bLambda_i^{(k)}\right\|^2 \leq  \gamma \left\|\frac{1}{2}\left(\bLambda_i^{(k+1)}-\bLambda_i^{(k)}\right) +  \frac{1}{2}\left(\bU_i^{(k+1)} {\bLambda_i^{(k+1)}}^\transp \bU_i^{(k+1)}-\bU_i^{(k)} {\bLambda_i^{(k)}}^\transp \bU_i^{(k)}\right)\right\|^2.
	\end{align*}
Thus,
\begin{align*}
&\|\bLambda_i^{(k+1)}-\bLambda_i^{(k)}\|^2 \\  \leq \, & 
\frac{\gamma w_i^2}{\mu_i^2}\left \|\bU_i^{(k+1)}\nabla f(\bP_i^{(k+1)})-\bU_i^{(k)}\nabla f(\bP_i^{(k)})\right\|^2 \\
= \, & \frac{\gamma w_i^2}{\mu_i^2} \|\bU_i^{(k+1)}\nabla f(\bP_i^{(k+1)})-\bU_i^{(k)}\nabla f(\bP_i^{(k+1)})  +\bU_i^{(k)}\nabla f(\bP_i^{(k+1)})-\bU_i^{(k)}\nabla f(\bP_i^{(k)})\|^2 \\
\leq\, & \frac{2\gamma w_i^2}{\mu_i^2}  \left(B_i^2\|\bU_i^{(k+1)}-\bU_i^{(k)}\|^2 +\|\nabla f(\bP_i^{(k+1)})-\nabla f(\bP_i^{(k)})\|^2\right) \\ 
\leq \, & \frac{2\gamma w_i^2B_i^2}{\mu_i^2}  \|\bU_i^{(k+1)}-\bU_i^{(k)}\|^2 +\frac{2\gamma w_i^2F_i^2}{\mu_i^2} \|\bP_i^{(k+1)}-\bP_i^{(k)}\|^2,
\end{align*}
where the last inequality follows from lemma \ref{lm:lipc}. 
Based on the above analysis,
\begin{align*}
	(a) & = \sum_{i=1}^m \mu_i \|\bLambda_i^{(k+1)}-\bLambda_i^{(k)}\|^2 \\
	& \leq  \sum_{i=1}^m \frac{2\gamma w_i^2B_i^2}{\mu_i}  \|\bU_i^{(k+1)}-\bU_i^{(k)}\|^2 +\frac{2\gamma w_i^2F_i^2}{\mu_i} \|\bP_i^{(k+1)}-\bP_i^{(k)}\|^2.
\end{align*}

We continue by bounding the terms $(b)$, $(c)$, $(d)$.
As
\(\Phi(\bW^{(k+1)},$ $ \bU^{(k+1)}, \bP, \bLambda^{(k)})\) is $(w_i+\mu_i)$-strongly convex for $\bP_i$, we have, 
\[
(b) \leq -\sum_{i=1}^m\frac{w_i+\mu_i}{2} \|\bP_i^{(k+1)}-\bP_i^{(k)}\|^2.
\]
See \cite[Theorem 2.1.8]{nesterov2018lectures} for details.  
Since $\bU^{(k+1)}$ is the global optimal solution of \linebreak
\(\Phi(\bW^{(k+1)}, \bU, \bP^{(k)}, \bLambda^{(k)}) + \sum_{i=1}^m \frac{h_i}{2}\|\bU_i - \bU_i^{(k)}\|^2\), we obtain 
\[
(c)\leq -\sum_{i=1}^m\frac{h_i}{2} \|\bU_i^{(k+1)}-\bU_i^{(k)}\|^2.
\] 
Similarly, we can derive the associated sufficient decrease term for $\bW$, i.e., 
\[
(d)\leq -\frac{1}{2}\lambda_{\text{min}}(L)\|\bW^{(k+1)}-\bW^{(k)}\|^2.
\] 
By summing up all the inequalities for $(a),(b),(c),(d)$,   
\begin{align*}
\Phi^{k+1}-\Phi^k \leq &- \sum_{i=1}^m   \left(\frac{h_i}{2} -\frac{2\gamma w_i^2B_i^2}{\mu_i}\right)\|\bU^{(k+1)}_i-\bU^{(k)}_i\|^2 \\
& -\sum_{i=1}^m\left(\frac{w_i+\mu_i}{2} - \frac{2\gamma w_i^2F_i^2}{\mu_i}\right) \|\bP^{(k+1)}_i-\bP^{(k)}_i\|^2 \\
& -\frac{1}{2}\lambda_{\text{min}}(L)\|\bW^{(k+1)}-\bW^{(k)}\|^2.
\end{align*}

We now apply the conditions
\(\mu_i>  -\frac{1}{2}(w_i-2\epsilon)$ $ +\frac{1}{2} \sqrt{(w_i-2\epsilon)^2+16\gamma w_i^2F_i^2}\)
and \linebreak \(h_i \ge \frac{4\gamma w_i^2B_i^2}{{\mu_i}} + 2 \epsilon\)
to arrive at 
 \begin{equation}\label{eq:suff}
\begin{aligned}
\Phi^{k+1}-\Phi^k \leq& -\frac{1}{2}\lambda_{\text{min}}(L)\|
\bW^{(k+1)}-\bW^{(k)}\|^2\\
&-\sum_{t=1}^m \epsilon\Big(\|\bU_i^{(k+1)}-\bU_i^{(k)}\|^2 +\|\bP_i^{(k+1)}-\bP_i^{(k)}\|^2\Big),
\end{aligned}
\end{equation}
which proves the statement of the theorem.

\end{proof}

\begin{proof}[Proof of Theorem \ref{main-theorem}] 
There are four core steps to complete this proof.
\begin{itemize}
\item \textbf{Step 1:} Show that the sequence $\{(\bW^{(k)}, \bU^{(k)}, \bP^{(k)}, \bLambda^{(k)})\}_{k=0}^\infty$ is bounded. 
\end{itemize}
The boundedness of the sequence  $\{\bP^{(k)}\}_{k=0}^\infty$ follows directly
from \condref{gradf}.
$\bU_i$ is a rotation matrix and thus bounded.
Recall that 
\begin{equation}\label{eq:primal-dual}
	\begin{aligned}
	\left\|\bLambda_i^{(k+1)}-\bLambda_i^{(k)}\right\|^2 \leq \frac{2\gamma w_i^2B_i^2}{\mu_i^2}  \|\bU_i^{(k+1)}-\bU_i^{(k)}\|^2 +\frac{2\gamma w_i^2F_i^2}{\mu_i^2} \|\bP_i^{(k+1)}-\bP_i^{(k)}\|^2.
	\end{aligned}
\end{equation}
Therefore, we can conclude that the dual variable $\bLambda$ is bounded. Using the update rule for $\bW$ directly gives a bound for $\bW$.
Hence, the sequence $\{(\bW^{(k)}, \bU^{(k)}, \bP^{(k)}, \bLambda^{(k)})\}_{k=0}^\infty$ is bounded, and thus a cluster point exists. 
\begin{itemize}
	\item \textbf{Step 2:} Prove that
	$\lim\limits_{k\rightarrow +\infty} \|\bU^{(k+1)}-\bU^{(k)}\|^2+ \|\bP^{(k+1)}-\bP^{(k)}\|^2+ \|\bW^{(k+1)}-\bW^{(k)}\|^2 +  \|\bLambda^{(k+1)}-\bLambda^{(k)}\|^2 =0$,
	where the squared norm of \(\mathbf{A}\) indicates the appropriate sum over
	all the squared norms of the \(\mathbf{A}_i\).
\end{itemize}
Suppose that $(\bW^{*}, \bU^{*}, \bP^{*}, \bLambda^{*})$ is a cluster point of the sequence $\{(\bW^{(k)}, \bU^{(k)}, \bP^{(k)}, \bLambda^{(k)})\}_{k=0}^\infty$. Let  $\{(\bW^{(k_i)}, \bU^{(k_i)}, \bP^{(k_i)}, \bLambda^{(k_i)})\}$ be a convergent subsequence such that 
\[
\lim_{i \rightarrow +\infty }(\bW^{(k_i)}, \bU^{(k_i)}, \bP^{(k_i)}, \bLambda^{(k_i)})= (\bW^{*}, \bU^{*}, \bP^{*}, \bLambda^{*}).
\]
By summing \equref{suff} from $k=0$ to $k = k_i-1$, we have 
\begin{align*}
&\Phi(\bW^{(k_i)}, \bU^{(k_i)}, \bP^{(k_i)}, \bLambda^{(k_i)})- \Phi(\bW^{(0)}, \bU^{(0)}, \bP^{(0)}, \bLambda^{(0)}) \\
&\quad \leq -\frac{1}{2}\lambda_{\text{min}}(L) \sum_{k=0}^{k_i-1}\|
\bW^{(k+1)}-\bW^{(k)}\|^2
 -\sum_{k=0}^{k_i-1}\sum_{t=1}^m \epsilon\left(\|\bU_i^{(k+1)}-\bU_i^{(k)}\|^2+\|\bP_i^{(k+1)}-\bP_i^{(k)}\|^2\right). 
\end{align*}
Taking the limit of $i \rightarrow +\infty$ in above inequality and rearranging terms, we obtain
\begin{equation}\begin{split}\label{eq:lcbimplies}
& \frac{1}{2}\lambda_{\text{min}}(L) \sum_{k=0}^{+\infty}\|
\bW^{(k+1)}-\bW^{(k)}\|^2
+ \sum_{k=0}^{+\infty}\sum_{t=1}^m \epsilon\left(\|\bU_i^{(k+1)}-\bU_i^{(k)}\|^2+\|\bP_i^{(k+1)}-\bP_i^{(k)}\|^2\right) \\
& \quad \leq \Phi(\bW^{(0)}, \bU^{(0)}, \bP^{(0)}, \bLambda^{(0)}) - \Phi(\bW^{*}, \bU^{*}, \bP^{*}, \bLambda^{*}) < \infty.
\end{split}\end{equation}
Here the last inequality holds as our augmented Lagrangian function is unbounded only if our input is unbounded. Moreover, the sequence $\{(\bW^{(k)}, \bU^{(k)}, \bP^{(k)}, \bLambda^{(k)})\}_{k=0}^\infty$ is bounded and we can complete the argument. 

\equref{lcbimplies} implies that  
\[
\sum_{k=0}^{+\infty}\|
\bW^{(k+1)}-\bW^{(k)}\|^2 < \infty , \quad
\sum_{k=0}^{+\infty}\sum_{i=1}^m\|
\bU_i^{(k+1)}-\bU_i^{(k)}\|^2 < \infty , \quad
\sum_{k=0}^{+\infty}\sum_{i=1}^m\|
\bP_i^{(k+1)}-\bP_i^{(k)}\|^2  < \infty.
\]
Hence,
\(\bW^{(k+1)}-\bW^{(k)} \rightarrow 0,\bU^{(k+1)}-\bU^{(k)} \rightarrow 0 ,\bP^{(k+1)}-\bP^{(k)} \rightarrow 0. \)
Due to the primal-dual relationship \eqref{eq:primal-dual}, we can thus conclude that $\bLambda^{(k+1)}-\bLambda^{(k)} \rightarrow 0$.

\begin{itemize}
	\item \textbf{Step 3}: Derive a safeguard property.
\end{itemize}

Define the extended augmented Lagrangian function
\(G(\bW, \bU, \bP,$ $ \bLambda) = \Phi(\bW, \bU, \bP, \bLambda) + \sum_{i=1}^m g(\bU_i)\).
Recall the optimization optimality conditions for the ADMM updates (i.e., the $k+1$ iteration). 
\begin{equation*}
\left\{
\begin{aligned}
& 0 =  h_i\left(\bU_i^{(k+1)}-\bU_i^{(k)}\right) + \mu_i\left(\bU_i^{(k+1)}\bP_i^{(k)}-(G\bW^{(k+1)})_i-\bLambda_i^{(k)}\right){\bP_i^{(k)}}^\transp
+\partial g(\bU_i^{(k+1)}), \forall i  \\
& 0 =  w_i \nabla f(\bP_i^{(k+1)}) - \mu_i \operatorname{symm}\left({\bU_i^{(k+1)}}^\transp\bLambda_i^{(k+1)}\right),\forall i\\
& \bLambda_i^{(k+1)} = \bLambda_i^{(k)} + (G\bW^{(k+1)})_i-\bU_i^{(k+1)}\bP_i^{(k+1)}.
\end{aligned}
\right.
\end{equation*}
Moreover, a stationary point satisfying
\(0 \in \partial G(\bW^*, \bU^*, \bP^*, \bLambda^*)\) is equivalent to the KKT
point property in \equref{kkt}.
Subsequently, we want to bound the subgradient \linebreak
\(\text{dist}(0, \partial G(\bW^{(k+1)}, \bU^{(k+1)}, \bP^{(k+1)}, \bLambda^{(k+1)}))\)
by the iterate difference, i.e.
\(\|\bP^{(k+1)}-\bP^{(k)}\|\), \(\|\bU^{(k+1)}-\bU^{(k)}\|\),
\(\|\bW^{(k+1)}-\bW^{(k)}\|\),
\begin{align*}
& \text{dist}\left(0, \partial G(\bW^{(k+1)}, \bU^{(k+1)}, \bP^{(k+1)}, \bLambda^{(k+1)}) \right) \leq \\
&\sum_{i=1}^m \left \|w_i \nabla f(\bP_i^{(k+1)}) -\mu_i\operatorname{symm}\left({\bU_i^{(k+1)}}^\transp\bLambda_i^{(k+1)}\right)\right \| \\
& + \sum_{i=1}^m \text{dist}\left(0,\partial g({\bU_i^{(k+1)}})- \mu_i\bLambda_i^{(k+1)}{\bP_i^{(k+1)}}^\transp\right)
 + \sum_{i=1}^m \left \|(G\bW^{(k+1)})_i - \bU_i^{(k+1)}\bP_i^{(k+1)}\right\|.
\end{align*}

We observe that the first term is 0, and the third term is identical to $\sum_{i=1}^m \|\bLambda_i^{(k+1)} - \bLambda_i^{(k)}\|$.
It remains to bound the second term.
Starting from the optimality condition w.r.t $\bU_i^{(k+1)}$,
\begin{align*}
0 = & h_i\left(\bU_i^{(k+1)}-\bU_i^{(k)}\right) + \mu_i\left(\bU_i^{(k+1)}\bP_i^{(k)}-(G\bW^{(k+1)})_i-\bLambda_i^{(k)}\right){\bP_i^{(k)}}^\transp
 +\partial g(\bU_i^{(k+1)})\\
0 = & h_i\left(\bU_i^{(k+1)}-\bU_i^{(k)}\right) + \mu_i\left (\bLambda_i^{(k+1)}+ \bU_i^{(k+1)}\left(\bP_i^{(k)}-\bP_i^{(k+1)}\right)\right){\bP_i^{(k)}}^\transp 
+\partial g(\bU_i^{(k+1)})\\
0 = &  h_i\left(\bU_i^{(k+1)}-\bU_i^{(k)}\right) + \mu_i\bLambda_i^{(k+1)}\left({\bP_i^{(k)}}^\transp-{\bP_i^{(k+1)}}^\transp\right)+\\
& \mu_i \bU_i^{(k+1)}\left(\bP_i^{(k)}-\bP_i^{(k+1)}\right){\bP_i^{(k)}}^\transp +\mu_i\bLambda_i^{(k+1)}{\bP_i^{(k+1)}}^\transp+\partial g(\bU_i^{(k+1)}) .
\end{align*}
As $\{\bW^{(k+1)}, \bU^{(k+1)}, \bP^{(k+1)}, \bLambda^{(k+1)}\}_{k\ge 0}$ is bounded, i.e., \textbf{step 1},
there exists a constant $D$ such that 
\begin{align*}
&(\bW^{(k+1)}, \bU^{(k+1)}, \bP^{(k+1)},\bLambda^{(k+1)}) \in \mathcal{C},\,\, \mathcal{C} = \left\{(\bW, \bU, \bP, \bLambda)| \|\bW, \bU, \bP, \bLambda\|\leq D\right\}.
\end{align*}
Thus,
\begin{align*}
	& \text{dist}\left(0,\mu_i\bLambda_i^{(k+1)}{\bP_i^{(k+1)}}^\transp+\partial g(\bU_i^{(k+1)})\right) \\
	& \leq h_i \|\bU_i^{(k+1)}-\bU_i^{(k)}\| + \mu_i\left(D+\|\bP_i^{(k)}\|\right)\|{\bP_i^{(k)}}-{\bP_i^{(k+1)}}\| .
\end{align*}
Due to Lemma \ref{lm:lipc}, we know that there exists a constant $\kappa>0$ 
such that
\begin{align*}
& \text{dist}\left(0, \partial G(\bW^{(k+1)}, \bU^{(k+1)}, \bP^{(k+1)}, \bLambda^{(k+1)}) \right) \\
\leq \, &\kappa  \left(\|\bU^{(k+1)}-\bU^{(k)}\|+\|\bP^{(k+1)}-\bP^{(k)}\|+\|\bLambda^{(k+1)}-\bLambda^{(k)}\|\right).
\end{align*}
Based on Step 2, i.e., $\bW^{(k+1)}-\bW^{(k)} \rightarrow 0,\bU^{(k+1)}-\bU^{(k)} \rightarrow 0 ,\bP^{(k+1)}-\bP^{(k)} \rightarrow 0$, there exists
$d_{k+1} \in \partial G(\bW^{(k+1)}, \bU^{(k+1)}, \bP^{(k+1)}, \bLambda^{(k+1)}) $ such that $\|d_{k+1}\| \rightarrow 0$. By the definition of general subgradient, we have $0\in \partial G(\bW^{*}, \bU^{*}, \bP^{*}, \bLambda^{*})$. Thus, any cluster point $(\bW^{*}, \bU^{*}, \bP^{*}, \bLambda^{*})$ of a sequence $(\bW^{(k)}, \bU^{(k)},$ $ \bP^{(k)}, \bLambda^{(k)})$ generated by the ADMM is a stationary point, or KKT point equivalently.

\begin{itemize}
	\item \textbf{Step 4:} Show that $G(\bW, \bU, \bP, \bLambda)$  is a
	Kurdyka-\L{}ojasiewicz function.
\end{itemize}
Following the proof of Theorem 2.9 in \cite{attouch2013convergence},
we can infer the global convergence of the sequence $\{\bW^{(k)}, \bU^{(k)}, $ $\bP^{(k)}, \bLambda^{(k)}\}$ from the K\L{} condition of the extended augmented Lagrangian function
$G(\bW, \bU, \bP, \bLambda)$.
Therefore, the final step is to prove that $G(\bW, \bU, \bP, \bLambda)$  is a
Kurdyka-\L{}ojasiewicz function. 

Recall that 
\begin{align*}
 G(\bW, \bU, \bP, \bLambda) = & \sum_{i=1}^m w_i f(\bP_i) 
  + \sum_{i=1}^m \frac{\mu_i}{2}\left(\|(G\bW)_i-\bU_i\bP_i + \bLambda_i\|^2 - \|\bLambda_i\|^2 \right)
  + \sum_{i=1}^m g(\bU_i). 
\end{align*}

The K\L{} property is closed under summation \cite{attouch2013convergence}.
Thus, we can check the above summands one by one.
$\sum_{i=1}^m w_i f(\bP_i)$ is strongly convex and hence satisfies the uniform
convexity property, and is a K\L{} function \cite[Section 4.1]{attouch2010proximal}.
$\sum_{i=1}^m \frac{\mu_i}{2}\left(\|(G\bW)_i-\bU_i\bP_i + \bLambda_i\|^2 - \|\bLambda_i\|^2 \right) $ is a polynomial function and thus semi-algebraic, and semi-algebraic functions satisfy the K\L{} property \cite{attouch2010proximal,attouch2013convergence}.
As $g(\cdot)$ is the indicator function over the special orthogonal group, it
is a  K\L{} function (via Stiefel manifolds \cite{attouch2010proximal}).

This completes the proof of the theorem.
\end{proof} \end{appendices}

\clearpage 
\bibliographystyle{plain}
\bibliography{bibliography.bib}

\clearpage
\pagenumbering{arabic}
\setcounter{page}{1}
\setcounter{section}{0}

\section{Supplemental: Implementation Details}

This appendix contains details needed to implement our splitting method.
For this supplemental material,
\(\varepsilon_m\) is the machine epsilon of the chosen
floating point type.

\subsection{Computing the Jacobian Map}

For triangle and tetrahedral meshes we compute the Jacobian of the
map from \(\bV\) to \(\bW\) using the
gradient operator for piecewise linear Langrangian finite elements.
The gradient vector of the \(k\)-th coordinate function of \(\bW\) with
with respect to the source mesh \(\bV\) on the triangle/tetrahedron
\(j\) corresponds to the \(k\)-th column of the Jacobian matrix
on the element \(j\).
For more background on interpreting the Jacobian as a
finite element gradient, see \cite{Pinkall1993}.

On surfaces, where we need to compute a map from
\(\Ro^3\) to \(\Ro^2\), we use the intrinsic gradient matrix,
 to get Jacobians in \(\Rnm{2}{2}\)
\cite[\texttt{grad\textunderscore{}intrinsic.h}]{libigl}.
For volumes, where we are computing a map from
\(\Ro^3\) to \(\Ro^3\), we use the standard coordinate-aligned
gradient matrix \cite[\texttt{grad.h}]{libigl}.

\subsection{Solving the Optimization in \(\bP\)}

To perform the optimization step in \(\bP\), we need to solve
\equref{rootfindinginp}.
This section explains our approach to solving equations of the form
\begin{equation}\label{eq:genericpeqtosolve}
	w \nabla f(P) + \mu P = \mu Q
\end{equation}
for \(P \in \SPDn{d}\), where \(Q\) is a symmetric matrix.

\subsubsection{Symmetric Gradient Energy}

For \(f = \fSG\), \(\nabla f(P) = P - P^{-1}\).
Thus \equref{genericpeqtosolve} becomes
\begin{equation}\label{eq:peqfsgpluggedin}
	(w + \mu) P^2 - \mu Q P - wI = 0
	\;\textrm{,}	
\end{equation}
where \(I\) is the identity matrix.
\equref{peqfsgpluggedin} is a quadratic equation in \(P\) and has a single symmetric positive definite solution,
which can be obtained using the regular quadratic formula:
\begin{equation}\label{eq:solvesgwithquadraticformula}
	P = \frac{1}{2(w + \mu)}
	\left( \mu Q + \sqrt{\mu^2 Q^2 + 4 w (w + \mu)I} \right)
	\;\textrm{.}
\end{equation}

We compute the matrix square root for \(d=2,3\) using \cite{Franca1989}.
If we determine that this method can not be used reliably because
of floating point issues (the discriminant, as of \cite{Franca1989}, is
smaller than \(\sqrt{\varepsilon_m}\)), we perform an eigendecomposition and
compute the square root of all eigenvalues instead.
If we determine \(Q\) to be very small (\(\norm{Q}^2 < \sqrt{\varepsilon_m}\)),
we employ a Taylor approximation of \equref{solvesgwithquadraticformula} in
\(Q\).

\subsubsection{Symmetric Dirichlet Energy}

For \(f = \fSD\), \(\nabla f(P) = P - P^{-3}\).
Thus \equref{genericpeqtosolve} becomes
\begin{equation}\label{eq:peqfsdpluggedin}
	(w + \mu) P^4 - \mu Q P^3 - wI = 0
	\;\textrm{,}	
\end{equation}
where \(I\) is the identity matrix.
\equref{peqfsdpluggedin} is a quartic equation, for which we know there
is
a unique symmetric positive definite solution, as \(f\) is convex.
We solve this quartic equation by applying eigendecomposition,
transforming the problem into \(d\) scalar problems in the eigenvalues,
and using the explicit quartic root finding method \cite{Khashin2020}
to find the unique positive solution to the scalar quartic equation.
If, due to floating point issues, the quartic solver fails to find
a result that is within a specified tolerance, we improve the solver's
result using Newton root finding.
\\

For both energies, if the determinant or trace of \(P\) are smaller than
\(\sqrt{\varepsilon_m}\), we explicitly ensure that its eigenvalues are
at least \(\sqrt{\varepsilon_m}\).

\subsection{Solving the Optimization in \(\bU\)}

To perform the optimization step in \(\bU\), we need to solve
the Procrustes problem \equref{optimizationinudecoupled}.

\subsubsection{\(d=2\)}

In two dimensions, we employ our own simple Procrustes solver.
Our goal is to find
\begin{equation}\begin{split}\label{eq:twodprocrustes}
	\varphi &= \argmin_\varphi \norm{U(\varphi) - Q}^2 =
	\argmin_\varphi \left( -U(\varphi) \cdot Q \right) , \\
	&\textrm{where } U(\varphi) \coloneqq \begin{pmatrix}
 		\cos\varphi & -\sin\varphi \\
 		\sin\varphi & \cos\varphi
 	\end{pmatrix}
	\;\textrm{,}
\end{split}\end{equation}
given an arbitrary \(Q \in \Rnm{2}{2}\).
The objective function from \equref{twodprocrustes} will
attain its minimum at a critical point of its objective function,
which is a root of a simple trigonometric equation that
can be solved using the \(\atant\) function.
The second derivative of the objective function
is then used to pick out the minimum among the critical points.

In practice, we store the rotations in \(\bU\)
as the real and
imaginary parts of a complex number, so that \(\atant\)
does not need to be computed using any trigonometric functions: we can simply
employ a square root.

\subsubsection{\(d=3\)}

In three dimensions, we use a standard implementation
\cite[\texttt{polar\textunderscore{}svd.h}]{libigl},
which computes a singular value decomposition to solve the Procrustes problem.
 
\end{document}